\newtheorem{proposition}{Proposition}
\newtheorem{lemma}{Lemma}
\newtheorem{remark}{Remark}
\newcommand{\ba}{\begin{array}}
\newcommand{\ea}{\end{array}}
\newcommand{\be}{\begin{equation}}
\newcommand{\ee}{\end{equation}}
\newcommand{\ds}{\displaystyle}
\newcommand{\mc}{\mathcal}
\newcommand{\ov}{\overline}
\def\1{\boldsymbol{1}}
\newcommand{\R}{\mathbb{R}}
\newcommand{\de}{\mathrm{d}}
\def\R{\mathbb{R}}
\tikzstyle{v_c}=[circle, draw,inner sep=2pt, minimum width=12pt, color=blue]
\tikzstyle{v_a}=[circle, draw,inner sep=2pt, minimum width=12pt, color=red]
\tikzstyle{edge} = [draw,thick,-,font=\small ]
\tikzstyle{label} = [draw,fill=black,font=\normalsize]
\def\BibTeX{{\rm B\kern-.05em{\sc i\kern-.025em b}\kern-.08em
	T\kern-.1667em\lower.7ex\hbox{E}\kern-.125emX}}
\def\MA#1{\textcolor{black}{#1}}	
	\title{\LARGE \bf Behavioral-feedback SIR epidemic model: analysis and control}
	\author{Martina~Alutto, Leonardo~Cianfanelli, 
Giacomo~Como, Fabio~Fagnani and Francesca~Parise
\thanks{Martina Alutto, Leonardo Cianfanelli, Giacomo Como and Fabio Fagnani are with the Department of Mathematical Sciences ``G.L.~Lagrange,'' Politecnico di Torino, 10129, Torino, Italy (e-mail: {\{martina.alutto; leonardo.cianfanelli; giacomo.como; fabio.fagnani\}@polito.it}). Giacomo Como is also with the Department of Automatic Control, Lund University, 22100 Lund, Sweden.
	Francesca Parise is with the School of Electrical and Computer Engineering, Cornell University, Ithaca, NY, USA (e-mail: fp264@cornell.edu).}
\thanks{This project was partially supported by the Italian Ministry of University and Research under the PRIN project ‘‘Extracting essential information and dynamics from complex networks’’, grant  no. 2022MBC2EZ and by the PCCW Affinito-Stewart Award.}
}
\begin{document}

\maketitle
\thispagestyle{empty}
\pagestyle{empty}

\begin{abstract}
	This paper investigates a behavioral-feedback SIR model in which the infection rate adapts dynamically based on the fractions of susceptible and infected individuals. We introduce an invariant of motion and we characterize the peak of infection. We further examine the system under a threshold constraint on the infection level. Based on this analysis, we formulate an optimal control problem to keep the infection curve below a healthcare capacity threshold while minimizing the economic cost. For this problem, we study a feasible strategy that involves applying the minimal necessary restrictions to meet the capacity constraint and characterize the corresponding cost.
\end{abstract}

\begin{keywords}
	Epidemic models, Susceptible-Infected-Recovered model, Optimal control problem.
\end{keywords}

\section{Introduction}\label{sec:introduction}
Predicting the spread of an infectious disease is essential to prevent healthcare systems from being overwhelmed, as seen during the COVID-19 pandemic. Public health policies rely on mathematical models to guide containment strategies, ensuring a balance between epidemiological control and socio-economic stability. Among these models, deterministic compartmental frameworks, such as the Susceptible-Infected-Recovered (SIR) model \cite{Kermack.McKendrick:1927, Hethcote2000TheMO}, offer fundamental insights into epidemic dynamics. 

Such fundamental models are typically derived under the assumption of a constant transmission rate, failing to account for behavioral adaptations that may arise in response to the epidemic. Yet, human responses to epidemics, such as social distancing or the adoption of face masks, significantly influence the spread of infectious diseases. To address this problem, compartmental models have been extended to incorporate social feedback mechanisms, in which the infection rate adapts dynamically according to the state of the epidemic. This class of models, known as feedback SIR models, introduces so-called \emph{behavioral functions}, often represented as smooth threshold, saturation, or polynomial functions, to capture mitigation efforts (see e.g. \cite{CAPASSO197843, korobeinikov2006lyapunov, franco2020feedback, Baker2020, Alutto2021OnSE}). 
The models in \cite{CAPASSO197843, korobeinikov2006lyapunov} considered a transmission rate that decreases with infection prevalence (e.g., due to increasing awareness), proving for example that the infection curve remains unimodal as in the classical SIR epidemic model. Further refinements have analyzed stability properties and derived closed-form solutions \cite{franco2020feedback, Baker2020, korobeinikov2006lyapunov}. The work in \cite{Alutto2021OnSE} extends this framework by considering an infection rate that depends on \textit{both fractions of susceptibles $x$ and infected $y$ individuals}. Building on this latter model, we investigate a behavioral-feedback SIR model with a specific functional form for the infection rate, satisfying the following monotonicity conditions: non-decreasing in $x$ and non-increasing in $y$. 
Intuitively, a large fraction of susceptibles leads to a higher infection rate due to lower perceived risk (a common scenario in the early phases of an epidemic) and a large fraction of infected leads to a reduced infection rate because of increased awareness. 
The main novelty compared to \cite{Alutto2021OnSE} is that we establish the existence of an invariant of motion, a fundamental tool for the analysis of epidemic models. As in the classical SIR model, this invariant allows us to derive a closed-form relationship between the fractions of susceptible and infected individuals throughout the epidemic evolution, directly linking them to the initial state. 

Additionally, the derived constant of motion allows us to provide a precise estimate of the maximum infection peak, which is essential for designing effective containment policies aimed at mitigating the epidemic impact on the public health system. For example, several works in the literature investigated lockdown strategies that minimize economic costs while preserving healthcare capacity \cite{zino2021analysis, acemoglu2021optimal, kruse2020optimal, Cianfanelli.ea:2021, Miclo.ea:2022, ACEMOGLU2024111391}. Notably, \cite{kruse2020optimal} formulates a finite-horizon control problem for a SIR model, proving that the optimal control is bang-bang with at most two switches. In \cite{Cianfanelli.ea:2021}, this framework is extended to infinite time horizons, demonstrating that stabilizing the infected fraction is more effective than stabilizing the reproduction number when taking into account the fact that the mortality of the disease is increasing in the fraction of infected individuals.
A key contribution in the literature is the filling-the-box strategy derived in \cite{Miclo.ea:2022}, which ensures that the infection curve remains below a threshold representing the healthcare system’s capacity, often referred to as the ICU constraint. The optimal strategy in this framework involves applying the minimum restrictions necessary to meet the threshold constraint, a principle also validated in \cite{ACEMOGLU2024111391}, which formulates an optimal control problem incorporating molecular and serology testing for epidemic mitigation.
Motivated by the analysis of the model dynamics under a threshold constraint, we formulate an optimal control problem similar to that in \cite{Miclo.ea:2022}, but with feedback incorporated into the SIR dynamics. We study the filling-the-box strategy for this problem and characterize its corresponding cost with a sensitivity analysis with respect to the initial conditions.

The paper is structured as follows: Section~\ref{sec:bf-sir} introduces the behavioral-feedback SIR epidemic model. Section~\ref{sec:threshold} provides some geometric considerations on the state space. Section~\ref{sec:opt} formulates the optimal control problem and studies the properties of the filling-the-box strategy. Section~\ref{sec:conclusion} discusses future research directions.

\section{Behavioral-feedback SIR epidemic model}\label{sec:bf-sir}
In this section, we introduce the specific behavioral-feedback SIR (BF-SIR) epidemic model considered in this paper and we investigate its dynamics. 
\subsection{Model definition}
We consider the following generalized SIR epidemic model, that describes the spread of a disease within a homogeneous population \textit{with a state dependent transmission rate},
\be\label{eq:bf-sir-model} \dot x = -\frac{b(x)}{1+ a y }xy,\quad	\dot y=\frac{b(x)}{1+ a y }xy-\gamma y, \quad \dot z=\gamma y\,,\ee
where $x=x(t)$, $y=y(t)$ $z = z(t)$ indicate the fractions of susceptible, infected and recovered individuals in the population, respectively. The function $b: [0,1] \to (0,+\infty)$ is non-decreasing, differentiable and captures how the transmission rate varies with the number of susceptibles: a higher fraction of susceptibles may lead to a lower perceived risk, encouraging social interactions and riskier behaviors that facilitate contagion. 
\MA{The constant $a \geq 0$ quantifies the sensitivity of the transmission rate to the prevalence of infection. A higher value of $a$ means that susceptibles significantly reduce their contact rate (e.g., by adopting protective behaviors or limiting social interactions) when the fraction of infected individuals increases, thus lowering the effective transmission rate, while $\gamma>0$ represents a constant recovery rate from the disease.}

Observe that the class of models in \eqref{eq:bf-sir-model} includes as a special case the classical SIR epidemic model first proposed by \cite{Kermack.McKendrick:1927}, where $a=0$ and $b$ is a positive constant. 
Moreover, our model generalizes \cite{franco2020feedback, Baker2020}, by adding a dependence on $x$. 
On the other hand, \eqref{eq:bf-sir-model} is a special case of the behavioral-feedback SIR epidemic model studied in \cite{Alutto2021OnSE}. The specific structure considered in this paper allows us to refine the analysis conducted in \cite{Alutto2021OnSE}, for example by explicitly deriving an invariant of motion and characterizing the peak of infection. 

The next result establishes the well-posedness of the model by showing that $$\Delta = \{ (x,y,z) \in \mathbb{R}^3_+:\, x+y+z= 1 \}$$ is positively invariant under \eqref{eq:bf-sir-model}, and provides some preliminary results on the dynamics. \MA{The proof is omitted, as it follows from standard arguments and can be found in \cite{thesisMartina}.}

\begin{lemma}\label{prop:wellposedness}
	For every initial state $(x(0), y(0),z(0)) \in \Delta$, the BF-SIR epidemic model \eqref{eq:bf-sir-model} admits a unique solution $(x(t), y(t),z(t)) \in \Delta$ for $t\ge0$. Moreover, 
	\begin{itemize}
		\item[(i)] $t \to x(t)$ is monotonically non-increasing for all $t\ge0$;  
		\item[(ii)] $y(t) >0$ for all $t\ge0$ if and only if $y(0)>0$; 
		\item[(iii)] there exists $x_{\infty} \in \mathbb{R}$ such that $$\lim_{t \to +\infty} (x(t), y(t),z(t)) = (x_{\infty}, 0, 1-x_{\infty}).$$
	\end{itemize}
\end{lemma}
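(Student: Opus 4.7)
The plan is to handle the four assertions in the order: (a) local existence and uniqueness, (b) forward invariance of $\Delta$ (which upgrades local to global existence), then in parallel the monotonicity (i), the sign-preservation (ii), and finally the limit statement (iii).

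For local existence and uniqueness, I would observe that the vector field defining \eqref{eq:bf-sir-model} is continuously differentiable on an open neighborhood of $\Delta$: the map $x\mapsto b(x)$ is differentiable by assumption, and the factor $1/(1+ay)$ is smooth on $\{y>-1/a\}$ (and on all of $\mathbb{R}$ when $a=0$), so Picard--Lindel\"of applies and yields a unique solution on some maximal interval. To show $\Delta$ is forward invariant, I would first note that $\dot x+\dot y+\dot z\equiv 0$, so $x(t)+y(t)+z(t)\equiv 1$; then check the three boundary half-spaces: on $\{x=0\}$ we have $\dot x=0$, on $\{y=0\}$ we have $\dot y=0$, and on $\{z=0\}$ we have $\dot z=\gamma y\ge 0$. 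By standard invariance arguments this ensures that the nonnegative orthant is not left, and combined with the conservation of the sum, $\Delta$ is positively invariant. Since the solution remains in a compact set, it extends for all $t\ge 0$.

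Claim (i) is immediate: on $\Delta$ we have $x,y\ge 0$, $b(x)>0$ and $1+ay\ge 1$, so $\dot x=-b(x)xy/(1+ay)\le 0$. For claim (ii), the "only if" direction is trivial. For the "if" direction, write the $y$-equation as $\dot y=y\bigl(b(x)x/(1+ay)-\gamma\bigr)$; the bracketed factor is a continuous function of $t$ along the solution, so Gr\"onwall (or explicit integration) yields $y(t)=y(0)\exp\!\bigl(\int_0^t[b(x(s))x(s)/(1+ay(s))-\gamma]\,\de s\bigr)$, which is strictly positive whenever $y(0)>0$.

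For (iii), monotonicity and boundedness give $x(t)\downarrow x_\infty\in[0,1]$, and since $\dot z=\gamma y\ge 0$ with $z\le 1$, we also have $z(t)\uparrow z_\infty\in[0,1]$. Consequently $y(t)=1-x(t)-z(t)\to y_\infty:=1-x_\infty-z_\infty\ge 0$. The only non-routine step is showing $y_\infty=0$: integrating $\dot z=\gamma y$ over $[0,\infty)$ yields $\int_0^\infty y(s)\,\de s=(z_\infty-z(0))/\gamma<\infty$, and since $y(t)$ has a limit, that limit must be $0$, so $z_\infty=1-x_\infty$, as claimed. I expect this last compactness/integrability argument to be the only point requiring care; everything else is essentially a positivity and boundedness check on the vector field.
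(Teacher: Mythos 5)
Your proof is correct. Note that the paper itself omits the proof of this lemma (deferring to standard arguments and \cite{thesisMartina}), but your route --- Picard--Lindel\"of for local existence, invariance of $\Delta$ via the conserved sum plus the behavior of the vector field on the coordinate faces (equivalently, the exponential integral representation of $x$ and $y$), the sign of $\dot x$ for (i), the exponential formula for (ii), and monotone convergence of $x$ and $z$ together with $\int_0^\infty y\,\de s=(z_\infty-z(0))/\gamma<\infty$ for (iii) --- is exactly the standard argument being referenced, and it matches the technique the paper does spell out in the appendix for the controlled system (Lemma~\ref{lemma:well-posedness-control}).
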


To facilitate our analysis, we define the following sets
$$\mc S=\left\{(x,y)\in\R_{\geq 0}^2:\,x+y \leq 1\right\},$$
$$\mc S_+=\left\{(x,y)\in\R_{>0}^2:\,x+y \leq 1\right\},$$
and observe that there exists an invertible map $s : \mc S \to \Delta$ such that $s(x,y) = (x,y,1-x-y)$. For this reason, from now on, we shall omit the variable $z$ when denoting the state.

\subsection{Reproduction number}
As done in \cite{Alutto2021OnSE}, we can define a reproduction number for the BF-SIR epidemic model \eqref{eq:bf-sir-model} as 
\be\label{R} R(x,y) = \frac{xb(x)}{\gamma(1+a y)}\,.\ee
Note that the BF-SIR epidemic model \eqref{eq:bf-sir-model} can be rewritten in terms of the reproduction number \eqref{R} as
\be \label{eq:bf-sir-model-r}\dot x =- R(x,y) \gamma y, \quad \dot y= \left(R(x,y)-1\right) \gamma y, \quad  \dot z = \gamma y\,.\ee
The assumptions that $a\ge0$ and that $b(x)>0$ is differentiable and non-decreasing imply that
\begin{align}
	R_x(x,y)=& \frac{b(x)+x b'(x)}{\gamma (1+ay)}>0\,,	\label{partialR_x}\\
	R_y(x,y)=&\frac{-ax b(x)}{\gamma(1+a y)^2} \leq0 \,,	\label{partialR_y}
\end{align}
for every state $(x,y)$ in $\mc S$, namely, the reproduction number is increasing with respect to the fraction of susceptible individuals $x$ and non-increasing with respect to the fraction of infected individuals $y$.  

The unimodal behavior of the infection curve can be obtained as a special case of a more general result in \cite{Alutto2021OnSE}.
\begin{proposition}[Theorem 1,~\cite{Alutto2021OnSE}]\label{prop:peak}
	Consider the BF-SIR epidemic model \eqref{eq:bf-sir-model-r} with initial state $(x_0,y_0)$ in $\mc S$ such that $y_0>0$. Then, 
	\begin{enumerate}
		\item[(i)] if $R(x_0,y_0)\le1$, $t\mapsto y(t)$ strictly decreases for $t\ge0$;
		\item[(ii)]if $R(x_0,y_0)>1$, there exists a peak time $\hat t>0$ such that $t\mapsto y(t)$ strictly increases for $t$ in $[0,\hat t]$ and strictly decreases for $t\ge\hat t$.
	\end{enumerate}
\end{proposition}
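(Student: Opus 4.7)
My plan is to exploit the factorization $\dot y=(R(x,y)-1)\gamma y$ in \eqref{eq:bf-sir-model-r}. Since Lemma~\ref{prop:wellposedness}(ii) guarantees $y(t)>0$ for all $t\ge 0$, the sign of $\dot y$ is controlled entirely by the sign of $R(x(t),y(t))-1$. Both (i) and (ii) will then reduce to a single \emph{transversality lemma}: whenever the trajectory passes through a point with $R=1$ and $y>0$, the map $t\mapsto R(x(t),y(t))$ is strictly decreasing at that instant. To verify this I would differentiate along trajectories to get $\tfrac{d}{dt}R=R_x\dot x+R_y\dot y$; at a time $t^*$ where $R=1$ the factor $\dot y(t^*)$ vanishes, so the $R_y$ term disappears, while $\dot x(t^*)=-\gamma y(t^*)<0$ and $R_x>0$ by \eqref{partialR_x}. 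Hence $R_x\dot x$ is strictly negative there. This instantaneous sign computation is then bootstrapped to show that the sublevel set $\{R\le 1\}$ is forward invariant and, more precisely, that once the trajectory hits this set it enters $\{R<1\}$ strictly thereafter.

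For part (i), the hypothesis $R(x_0,y_0)\le 1$ combined with this no-return argument gives $R(x(t),y(t))<1$ for every $t>0$, so $\dot y<0$ throughout and $y$ is strictly decreasing.

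For part (ii), the hypothesis $R(x_0,y_0)>1$ gives $\dot y(0)>0$, and I would define $\hat t:=\inf\{t>0:R(x(t),y(t))\le 1\}$. Strict positivity of $\hat t$ is immediate from continuity of $R$ and the strict inequality at $t=0$; finiteness follows from Lemma~\ref{prop:wellposedness}(iii), because if $R$ remained above $1$ forever then $\dot y=(R-1)\gamma y$ would keep $y$ bounded away from $0$, contradicting the convergence $y(t)\to 0$. On $[0,\hat t)$ we have $R>1$ and hence $y$ strictly increasing; at $\hat t$ the transversality computation forces $R$ to cross $1$ downward, and by the invariance claim $R(t)<1$ for all $t>\hat t$, so $y$ strictly decreases there.

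The only subtle step, and the one I would write out with care, is upgrading the instantaneous sign $\tfrac{d}{dt}R<0$ at a crossing into the global forward-invariance of $\{R\le 1\}$. Specifically, if the trajectory were to re-enter $\{R>1\}$ at some first time $t_1>\hat t$, then by continuity $R(t_1)=1$ and $\dot R(t_1)\ge 0$, directly contradicting the transversality identity. Everything else is routine once this is in place; the monotonicity of $b$ and the assumption $a\ge 0$ enter only through the sign properties \eqref{partialR_x}--\eqref{partialR_y}, and the well-posedness and limit behavior come entirely from Lemma~\ref{prop:wellposedness}.
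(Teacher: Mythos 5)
The paper does not actually prove this statement: it is imported verbatim as Theorem~1 of \cite{Alutto2021OnSE}, with the remark that unimodality ``can be obtained as a special case of a more general result'' there. So there is no in-paper proof to compare against, and your argument has to stand on its own --- which it does. The transversality computation is correct: writing $\frac{\de}{\de t}R=R_x\dot x+R_y\dot y$ and evaluating at a crossing time where $R=1$, the term $R_y\dot y$ vanishes because $\dot y=(R-1)\gamma y=0$ there, while $R_x\dot x=-R_x\gamma y<0$ since $y(t)>0$ (Lemma~\ref{prop:wellposedness}(ii)) and $R_x>0$ by \eqref{partialR_x}. The no-return step is also sound: at a putative first re-entry time $t_1$ into $\{R\ge 1\}$ one has $R(t_1)=1$ and $\dot R(t_1)\ge 0$ by a one-sided difference-quotient argument, contradicting transversality; and finiteness of $\hat t$ correctly follows from Lemma~\ref{prop:wellposedness}(iii), since $R>1$ for all time would keep $y$ nondecreasing and bounded below by $y_0>0$. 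Two small points worth making explicit if you write this up: (a) when $R(x_0,y_0)=1$ in case (i), $\dot y(0)=0$, but $y$ is still strictly decreasing on $[0,\infty)$ as a function because $\dot y<0$ on $(0,\infty)$; the same remark applies at $t=\hat t$ in case (ii); (b) your observation that only $R_x>0$ is actually used (the sign of $R_y$ is irrelevant because its coefficient vanishes at the crossing) is exactly why the cited reference can prove this under weaker hypotheses than the specific functional form \eqref{R}.
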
\medskip
For our analysis, it will prove convenient to define
\be\label{rho} \rho(x,y)= 1-\frac1{R(x,y)} = 1- \frac{\gamma(1+a y)}{xb(x)}\,,\forall(x,y) \in \mc S_+\ee
and observe that $\rho(x,y) \ge 0$ if and only if $R(x,y) \geq 1$. 

\subsection{Invariant of motion}
To have a better characterization of the dynamics, we next derive an invariant of motion for the BF-SIR epidemic model~\eqref{eq:bf-sir-model-r} in $\mc S_+$, that is, a function $\Psi: \mathcal{S}_+ \to \mathbb{R}$ such that $\Psi(x(t), y(t))$ remains constant along every solution of \eqref{eq:bf-sir-model-r}. Notice that, if $\Psi(x,y)$ is differentiable on $\mc S_+$, then the chain rule and \eqref{eq:bf-sir-model-r} imply that  $\Psi(x,y)$ is an invariant of motion for \eqref{eq:bf-sir-model-r} if and only if for every $(x,y) \in \mc S_+$
$$0=\dot\Psi= \Psi_x\dot x+ \Psi_y\dot y=\gamma y\left((R(x,y)-1)\Psi_y-R(x,y)\Psi_x\right),$$
or, equivalently, if and only if 
\be \label{Psi_partial}  \Psi_x(x,y) = \rho(x,y) \Psi_y(x,y) \,,\ee
for every $(x,y)$ in $\mc S_+$. 

\begin{proposition}\label{prop:invariant}
	For a constant $a\geq 0$ and a differentiable non-decreasing function $b:[0,1]\to(0,+\infty)$, let 
	\be \label{eq:g} g(x) = \int_{1}^{x} \frac{\gamma a}{s b(s)}\de  s \leq 0\,,\ee
	and 
	\be\label{psi}\psi(x,y)=y e^{-g(x)}-\int_{1}^{x}e^{-g(s)} \left(-1 + \frac{\gamma }{s b(s)}\right)\de s\,.\ee
	Then, $\psi(x,y)$ is differentiable in $\mc S_+$
	with 
	\be\label{eq:psi-partial}\psi_y(x,y)=e^{-g(x)}\,,\qquad \psi_x(x,y)=e^{-g(x)}\rho(x,y)\,. \ee
	Moreover, $\psi(x,y)$ is an invariant of motion for the BF-SIR epidemic model \eqref{eq:bf-sir-model-r} in $\mc S_+$. 
\end{proposition}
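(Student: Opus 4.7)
The plan is a direct verification: compute the partial derivatives of $\psi$ explicitly, check that they take the claimed form, and then appeal to the infinitesimal characterization \eqref{Psi_partial} already derived.

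First, I would establish differentiability. Since $b:[0,1]\to(0,+\infty)$ is continuous and positive, the integrand $\gamma a/(sb(s))$ in \eqref{eq:g} is continuous on $(0,1]$, so by the fundamental theorem of calculus $g$ is continuously differentiable on $(0,1]$ with $g'(x)=\gamma a/(xb(x))$; hence $e^{-g(x)}$ is $C^1$ there. Likewise, the integrand $e^{-g(s)}(-1+\gamma/(sb(s)))$ appearing in \eqref{psi} is continuous on $(0,1]$, so its integral from $1$ to $x$ is $C^1$ in $x$. The linear dependence on $y$ is trivially smooth. Thus $\psi$ is differentiable on $\mc S_+$.

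Next, I would compute the partial derivatives. The computation of $\psi_y$ is immediate from the definition: $\psi_y(x,y)=e^{-g(x)}$, which is the first identity in \eqref{eq:psi-partial}. For $\psi_x$, I differentiate the first summand using the chain rule, giving $-y\,g'(x)e^{-g(x)}=-ye^{-g(x)}\gamma a/(xb(x))$, and differentiate the integral by the fundamental theorem of calculus, picking up $-e^{-g(x)}(-1+\gamma/(xb(x)))$. Collecting terms,
\begin{equation*}
\psi_x(x,y)=e^{-g(x)}\left[1-\frac{\gamma a y}{xb(x)}-\frac{\gamma}{xb(x)}\right]=e^{-g(x)}\left[1-\frac{\gamma(1+ay)}{xb(x)}\right],
\end{equation*}
which by the definition \eqref{rho} of $\rho$ equals $e^{-g(x)}\rho(x,y)$, as claimed.

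Finally, having established both identities in \eqref{eq:psi-partial}, the relation $\psi_x(x,y)=\rho(x,y)\,\psi_y(x,y)$ holds throughout $\mc S_+$. This is precisely condition \eqref{Psi_partial}, which was shown in the paragraph preceding the proposition to be equivalent to $\psi$ being an invariant of motion for \eqref{eq:bf-sir-model-r} on $\mc S_+$. The conclusion follows.

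There is no real obstacle here, as the proof is a routine verification once the integrating-factor ansatz $e^{-g(x)}$ is given; the only point requiring minor care is that $g$ and the second integral in \eqref{psi} be well-defined and differentiable up to the boundary $x\to 0^+$, but since we work in $\mc S_+$ where $x>0$ the continuity of the integrands on any compact subinterval of $(0,1]$ suffices.
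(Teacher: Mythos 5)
Your proof is correct and follows essentially the same route as the paper's: a direct computation of $\psi_y$ and $\psi_x$ (via the chain rule and the fundamental theorem of calculus) yielding $\psi_x=\rho\,\psi_y$, followed by an appeal to the equivalence \eqref{Psi_partial}. The only difference is that you spell out the differentiability of $g$ and of the integral term on $\mc S_+$, which the paper leaves implicit.
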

\begin{proof}
	The first equation in \eqref{eq:psi-partial} can be obtained by direct computation. To prove the second one, note that 
	$$\ba{rcl}\psi_x(x,y)
	&=&-y e^{-g(x)}g'(x)+e^{-g(x)}\left(1 -\frac{\gamma}{xb(x)}\right)\\
	&=&-y e^{-g(x)}\frac{\gamma a}{xb(x)}+e^{-g(x)}\left(1 -\frac{\gamma}{xb(x)}\right)\\
	&=&e^{-g(x)}\left(1-\frac{\gamma(1+ay)}{xb(x)}\right)\\
	&=&e^{-g(x)}\rho(x,y)\\
	&=&\psi_y(x,y)\rho(x,y)\,.\ea$$
	Since \eqref{Psi_partial} is satisfied, $\psi(x,y)$ is an invariant of motion. 
\end{proof}\smallskip

\subsection{Peak of infection}
For the classical SIR model, invariants of motion have been used to derive exact analytical solutions \cite{Harko2014ExactAS}. More generally, this approach has proven valuable in various epidemic models for determining both the final fraction of susceptible individuals and the peak fraction of infected individuals \cite{sontag2023explicit, feng2007final}. The next result characterizes the maximum infection peak of the BF-SIR model.

To establish the result, observe that, since $b(x)$ is positive, differentiable and non-decreasing, the function $f(x):=xb(x)$ is differentiable and strictly increasing in $[0,1]$, and thus invertible. Let us start with a technical lemma.
\begin{lemma}\label{lemma:tech}
	Suppose that there exists an initial condition $(x_0, y_0) \in \mc S_+$ such that $R(x_0, y_0)>1$. Define the value $\tilde x_{\gamma}=f^{-1}(\gamma)$ and the function $\tilde{x}$ as
	\begin{align}
		\tilde{x}(y) &:=f^{-1}(\gamma(1+a y)), \quad a > 0\,, \label{x_2}
	\end{align} 
	then $\tilde{x}_{\gamma} \in (0,1)$ and $\tilde{x}(y) \in (0,1]$ for all $y \in (0,\min \{1,\frac{f(1)-\gamma}{a \gamma}\}]$.
\end{lemma}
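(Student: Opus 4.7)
The plan is to exploit the strict monotonicity of $f(x)=xb(x)$ on $[0,1]$ to justify the existence and range of $f^{-1}$, and then to use the hypothesis $R(x_0,y_0)>1$ just once, in order to conclude that $\gamma<f(1)$. Everything else follows by bookkeeping.

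First I would record the basic properties of $f$: since $b$ is positive, differentiable, and non-decreasing, and since $x\mapsto x$ is strictly increasing, the product $f(x)=xb(x)$ is differentiable and strictly increasing on $[0,1]$, with $f(0)=0$ and $f(1)=b(1)>0$. Hence $f$ restricts to a homeomorphism from $[0,1]$ onto $[0,f(1)]$, and its inverse $f^{-1}:[0,f(1)]\to[0,1]$ is strictly increasing with $f^{-1}(0)=0$ and $f^{-1}(f(1))=1$.

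Next I would translate the hypothesis $R(x_0,y_0)>1$ in terms of $f$. By the definition \eqref{R}, $R(x_0,y_0)>1$ is equivalent to $f(x_0)>\gamma(1+a y_0)$. Since $a\ge 0$ and $y_0\ge0$, this gives $f(x_0)>\gamma$, and since $x_0\in\mc S_+\subset[0,1]$ and $f$ is strictly increasing, $f(x_0)\le f(1)$, so $0<\gamma<f(1)$. Applying the strictly increasing map $f^{-1}$ yields $\tilde{x}_\gamma=f^{-1}(\gamma)\in(0,1)$, which is the first claim.

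For the second claim, I would check the argument of $f^{-1}$ lies in its domain. For any $y\in(0,\min\{1,(f(1)-\gamma)/(a\gamma)\}]$, the condition $y\le (f(1)-\gamma)/(a\gamma)$ rearranges to $\gamma(1+ay)\le f(1)$, while $y>0$ and $a>0$ give $\gamma(1+ay)>\gamma>0$. Thus $\gamma(1+ay)\in(\gamma,f(1)]$, so by monotonicity $\tilde{x}(y)=f^{-1}(\gamma(1+ay))\in(\tilde{x}_\gamma,1]\subset(0,1]$, completing the proof.

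There is no real obstacle: the only subtlety is remembering that the hypothesis $R(x_0,y_0)>1$ is used purely to guarantee $\gamma<f(1)$, which is what makes both $\tilde{x}_\gamma$ well-defined in $(0,1)$ and the interval $(0,(f(1)-\gamma)/(a\gamma)]$ nonempty.
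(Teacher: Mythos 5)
Your proof is correct and follows essentially the same route as the paper: both reduce the hypothesis $R(x_0,y_0)>1$ to the single inequality $\gamma<f(1)$ and then conclude by strict monotonicity of $f$ and $f^{-1}$. The only cosmetic difference is that the paper deduces $f(1)>\gamma$ via the monotonicity of $R$ (obtaining $R(1,0)>1$ from \eqref{partialR_x}--\eqref{partialR_y}), whereas you manipulate $f(x_0)>\gamma(1+ay_0)$ directly; these are equivalent.
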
\smallskip
\begin{proof}
	First observe that since there exists an initial condition $(x_0, y_0) \in \mc S_+$ such that $R(x_0, y_0)>1$, from \eqref{partialR_x} and \eqref{partialR_y}, we obtain $R(1,0)>1$ which means $f(1)>\gamma$. Hence, $\tilde{x}_{\gamma}$ is well defined. 
	From $f(\tilde{x}_{\gamma} )= \gamma$, $\gamma\in (0, f(1))$ and monotonicity of $f(x)$, it follows that $\tilde{x}_{\gamma} \in (0,1)$. 
	
	Next, note that since $a>0$, $\frac{f(1)-\gamma}{a \gamma}>0$ and $y \leq \frac{f(1)-\gamma}{a \gamma}$, it holds that $\gamma(1+ay) \leq f(1)$. Hence, $\tilde{x}(y)$ is well defined. From $f(\tilde{x}(y)) = \gamma(1+ay) \in (0, f(1)]$ and monotonicity of $f$, we obtain $\tilde{x}(y) \in (0,1]$.
\end{proof}\smallskip

Define the value $y_a$ as
\be\label{ya} y_a = \begin{cases}
	1 & \text{ if } a=0,\\
	\min \{1,\frac{f(1)-\gamma}{a \gamma}\} & \text{ if } a>0.
\end{cases}\ee

\begin{proposition}\label{prop:y-max}
	Consider the BF-SIR epidemic model \eqref{eq:bf-sir-model}. 
	Let $(x_0,y_0)$ be the initial state in $\mc S$ with $y_0>0$. Then, 
	\begin{enumerate}
		\item[(i)] if $R(x_0, y_0) \leq1$, then $y(t) \leq y_0$ for all $t\ge0$;
		\item[(ii)] if $R(x_0, y_0) >1$, define
		\be\label{phi}\phi(y) := \psi\left(f^{-1}(\gamma(1+a y)), y\right),\ee
		for all $y \in (0, y_a]$, then $\phi$ is invertible \MA{and there exists a finite time $\hat t$ at which $y(\hat t) = \phi^{-1}(\psi(x_0, y_0))$ and $y(t)\leq  \phi^{-1}(\psi(x_0, y_0))$ for all $t\ge0$.}
	\end{enumerate}
\end{proposition}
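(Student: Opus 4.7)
The plan is to handle the two cases separately, with part~(i) essentially immediate and part~(ii) resting on combining the peak-time characterization of Proposition~\ref{prop:peak}(ii), the first-order optimality condition at the peak, and the invariance of $\psi$ from Proposition~\ref{prop:invariant}.

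For part~(i), I would invoke Proposition~\ref{prop:peak}(i) directly: when $R(x_0,y_0) \leq 1$, the map $t\mapsto y(t)$ is strictly decreasing, so $y(t) \leq y_0$ for every $t\geq0$.

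For part~(ii), I would start by applying Proposition~\ref{prop:peak}(ii) to obtain a finite peak time $\hat t>0$ at which $y$ attains its global maximum, so that $y(t) \leq y(\hat t)$ for all $t\geq0$. At this peak, $\dot y(\hat t)=0$ together with $y(\hat t)>0$ (which holds by Lemma~\ref{prop:wellposedness}(ii)) forces $R(x(\hat t), y(\hat t))=1$, equivalently $f(x(\hat t)) = x(\hat t)b(x(\hat t)) = \gamma(1+ay(\hat t))$, i.e.\ $x(\hat t) = f^{-1}(\gamma(1+ay(\hat t)))$. I would then verify that $y(\hat t)\in(0,y_a]$: positivity comes from the previous lemma, while $y(\hat t)\leq 1$ is trivial and, for $a>0$, the peak identity combined with $x(\hat t)\leq 1$ yields $y(\hat t)=(f(x(\hat t))-\gamma)/(a\gamma)\leq (f(1)-\gamma)/(a\gamma)$. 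Invariance of $\psi$ along the trajectory then gives $\psi(x_0,y_0)=\psi(x(\hat t),y(\hat t))=\phi(y(\hat t))$.

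The main obstacle, and the real technical step, is showing that $\phi$ is invertible on $(0,y_a]$. I would establish strict monotonicity by direct differentiation, writing $\phi'(y)=\psi_x(\tilde x(y),y)\,\tilde x'(y)+\psi_y(\tilde x(y),y)$ with $\tilde x(y)=f^{-1}(\gamma(1+ay))$. The key observation is that along the curve $x=\tilde x(y)$ the identity $f(x)=\gamma(1+ay)$ forces $R(x,y)=1$, hence $\rho(x,y)=0$; combined with $\psi_x=e^{-g(x)}\rho(x,y)$ from \eqref{eq:psi-partial}, this annihilates the first term and leaves $\phi'(y)=e^{-g(\tilde x(y))}>0$. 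Thus $\phi$ is strictly increasing and invertible, and applying $\phi^{-1}$ to the identity $\phi(y(\hat t))=\psi(x_0,y_0)$ yields $y(\hat t)=\phi^{-1}(\psi(x_0,y_0))$, with the upper bound $y(t)\leq \phi^{-1}(\psi(x_0,y_0))$ inherited from $y(\hat t)$ being the global maximum. The somewhat delicate point worth emphasizing is that $\phi$ is defined precisely along the level set $\{R=1\}$, which is exactly where $\psi_x$ vanishes; this is what rescues invertibility despite the non-trivial dependence of $\tilde x$ on $y$.
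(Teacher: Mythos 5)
Your proposal is correct and follows essentially the same route as the paper's proof: part~(i) via Proposition~\ref{prop:peak}(i), and part~(ii) by locating the peak where $R=1$, checking $y(\hat t)\in(0,y_a]$, invoking the invariance of $\psi$, and showing $\phi'(y)=e^{-g(\tilde x(y))}>0$ because $\rho$ vanishes on the level set $\{R=1\}$. No gaps to report.
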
\smallskip
\begin{proof}
	(i) Let $(x(t), y(t))$ be the solution of the BF-SIR epidemic model \eqref{eq:bf-sir-model}, with initial state $(x_0,y_0)$ in $\mc S$.
	If $R(x_0,y_0)\leq1$, from Proposition~\ref{prop:peak}(i), the fraction of infected $y(t)$ attains its maximum value at its initial state $y_0$. 
	
	(ii) Assume now that $R(x_0,y_0)>1$. From Proposition~\ref{prop:peak}(ii), there exists a peak time $\hat t>0$ such that $\dot y(\hat t) = 0$. From Lemma~\ref{prop:wellposedness}(ii), $y_0>0$ implies $y(t)>0$ for all $t \geq0$, then from \eqref{eq:bf-sir-model-r}, it follows that, $R(x(\hat t), y(\hat t))=1$, that is
	\be\label{x-hat} x(\hat t) \, b(x(\hat t))=\gamma(1+a y(\hat t))\,,\ee
	where $y(\hat t)>y_0$. It follows from \eqref{x-hat} that $y(\hat t)\in (0, y_a)$.
	Hence, by Lemma \ref{lemma:tech}, the fraction of susceptibles at the peak time $\hat t$ is 
	\be\label{x-hat2} x(\hat t) = f^{-1}\left(\gamma(1+a y(\hat t))\right) = \tilde{x}(y(\hat t))\,.\ee
	
	Moreover, note that $R(\tilde x(y),y) = 1$ for $y \in (0, y_a]$, which in turn is equivalent to
	\be\label{rho-hat} 
	\rho(\tilde x(y),y) = 0, \quad \forall y \in  (0, y_a]\,,\ee 
	by definition of $\rho$. From Proposition~\ref{prop:invariant}, it follows that $\psi(x(\hat t), y(\hat t)) = \psi(x_0,y_0)$. Moreover, \eqref{phi} and \eqref{x-hat2} imply 
	\be\label{psi-hat} \phi(y(\hat t))= \psi(\tilde x( y(\hat t)), y(\hat t))= \psi(x(\hat t), y(\hat t))= \psi(x_0,y_0).\ee 
	From \eqref{eq:psi-partial} and \eqref{rho-hat}, $\phi(y)$ is differentiable with derivative
	$$\mspace{-10mu}
	\ba{rcl}\label{phi-prime}
	\phi'(y)&=& \psi_x(\tilde x(y), y) \frac{a \gamma}{f'(\tilde x(y))}\!+\! \psi_y(\tilde x(y), y)\\
	&=& e^{-g(\tilde x(y))} \rho\left(\tilde x(y), y\right) \frac{a \gamma}{f'(\tilde x(y))}+ e^{-g(\tilde x(y))}\\
	&=& e^{-g(\tilde x(y))}\\
	&>& 0\,.\ea$$
	This means that $\phi$ is invertible and from \eqref{psi-hat} we get
	$$y(\hat t) = \phi^{-1}(\psi(x_0, y_0))\,.$$
	Proposition~\ref{prop:peak}(ii) implies that \MA{$\phi^{-1}(\psi(x_0, y_0))$ is the infection peak and thus $y(t) \leq \phi^{-1}(\psi(x_0, y_0))$ for all $t\ge0$}.
\end{proof}\smallskip
\begin{remark}
	Note that from Lemma~\ref{prop:wellposedness}(ii), if $y_0=0$ then the fraction of infected $y(t)$ will remain $0$ for all $t\ge0$. Then, for all $(x_0,y_0)$ in $\mc S$ we can define the peak of infection as 
	\be\label{y-max}
	\max_{t \geq 0} \, y(t)= \begin{cases}
		y_0 &\text{if } R(x_0, y_0) \leq1 \text{ or } y_0=0\\
		\phi^{-1}(\psi(x_0, y_0)) &\text{if } R(x_0, y_0) >1, y_0\neq 0.
	\end{cases}\ee
\end{remark}\medskip
\begin{remark}
	Understanding the maximum value of $y(t)$ is crucial, as usually social planners strive to keep the proportion of infected individuals below a predetermined threshold while minimizing the economic and social costs of restricting economic activity and social interactions. For example, this threshold may correspond to the healthcare system’s capacity to provide adequate care to infected patients. Adhering to this constraint ensures that all patients receive the necessary medical care, preventing the healthcare system from becoming overwhelmed.
\end{remark}

\section{Geometric considerations on the state space}\label{sec:threshold}
In this section, we provide some geometric considerations on the state space based on a threshold $\ov y$ in $(0,1)$ on the fraction of infected individuals, which will be interpreted as a constraint in the optimal control problem presented in Section~\ref{sec:opt}.

Given $\bar y$, define the portion of state space below the threshold as
\be \label{D}\mc D_{\ov y}=\{(x,y)\in\mc S: y \leq \ov y\}\,.\ee

We assume that the objective of the central planner is to keep the trajectory in $\mc D_{\ov y}$ so that the infected curve never exceeds the capacity $\ov y$. 
We start by deriving a simple sufficient condition on $\ov y$ such that the dynamics will never reach the capacity threshold, even without interventions.
\begin{lemma}
	If $R(1-\ov y, \ov y)\leq 1$, then $\mc D_{\ov y}$ is invariant for the BF-SIR epidemic model \eqref{eq:bf-sir-model}. 
\end{lemma}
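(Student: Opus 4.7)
The plan is to reduce invariance of $\mathcal{D}_{\bar y}$ to a tangency check on the new piece of its boundary, namely the horizontal segment where $y=\bar y$. Since $\mathcal{D}_{\bar y}\subset\mathcal{S}$ and $\mathcal{S}$ (equivalently $\Delta$) is already positively invariant by Lemma~\ref{prop:wellposedness}, the only way a trajectory could leave $\mathcal{D}_{\bar y}$ is by crossing the line $\{y=\bar y\}$ upward. I therefore only need to show that $\dot y\le 0$ at every state $(x,\bar y)\in\mathcal{S}$, i.e.\ for every $x\in[0,1-\bar y]$.

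For this I would exploit the monotonicity of the reproduction number. By \eqref{partialR_x} and \eqref{partialR_y}, $R$ is non-decreasing in $x$ and non-increasing in $y$. Hence, for any $x\in[0,1-\bar y]$,
\[
R(x,\bar y)\le R(1-\bar y,\bar y)\le 1,
\]
where the last inequality is the hypothesis. Plugging this into the second equation of \eqref{eq:bf-sir-model-r} gives
\[
\dot y\big|_{y=\bar y}=(R(x,\bar y)-1)\,\gamma\,\bar y\le 0.
\]
This is precisely Nagumo's tangency condition on the boundary piece $\{y=\bar y\}\cap\mathcal{S}$: the vector field points (weakly) inward into $\mathcal{D}_{\bar y}$.

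To conclude, I would combine this with the invariance of $\mathcal{S}$ from Lemma~\ref{prop:wellposedness}: if a solution starts in $\mathcal{D}_{\bar y}$, it stays in $\mathcal{S}$ for all $t\ge 0$, and by the tangency condition on $\{y=\bar y\}$ it cannot escape through the upper boundary. A short contradiction argument formalises this: suppose by contradiction there exist $t_1>t_0\ge 0$ with $y(t_0)=\bar y$ and $y(t)>\bar y$ for $t\in(t_0,t_1]$; then $\dot y(t_0)>0$, contradicting the computation above. Hence $\mathcal{D}_{\bar y}$ is positively invariant.

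I do not anticipate a real obstacle: the argument is a direct tangency/Nagumo check, and the only subtlety is remembering to use the monotonicity of $R$ in both variables to replace the pointwise check on $\{y=\bar y\}$ with the single hypothesis at the worst-case point $(1-\bar y,\bar y)$.
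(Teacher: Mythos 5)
Your route is genuinely different from the paper's. The paper argues globally: it notes that the hypothesis together with the monotonicity of $R$ forces $R(x,y)<1$ everywhere in the region $\{(x,y)\in\mc S: y>\ov y\}$ (there $x\le 1-y<1-\ov y$), and then invokes the unimodality result (Proposition~\ref{prop:peak}) and the peak characterization \eqref{y-max}: the maximum of $y$ is either $y_0\le\ov y$ or is attained where $R=1$, which cannot happen above the threshold. You instead run a local Nagumo-type tangency check on the single boundary segment $\{y=\ov y\}$, using $R(x,\ov y)\le R(1-\ov y,\ov y)\le 1$ to get $\dot y\le 0$ there. That check is correct, and since the vector field is locally Lipschitz the subtangentiality condition is indeed sufficient for forward invariance, so the strategy is sound and arguably more elementary (it does not need Proposition~\ref{prop:peak} at all).

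However, the ``short contradiction argument'' you offer to formalise the conclusion has a genuine flaw: from $y(t_0)=\ov y$ and $y(t)>\ov y$ for $t\in(t_0,t_1]$ you can only conclude $\dot y(t_0)\ge 0$, not $\dot y(t_0)>0$ (consider $y(t)=\ov y+(t-t_0)^2$). This matters because your boundary estimate only gives the weak inequality $\dot y\le 0$, and equality is actually attainable at the corner $(1-\ov y,\ov y)$ when $R(1-\ov y,\ov y)=1$, so no contradiction results. The clean repair is exactly the observation the paper starts from: by strict monotonicity of $R$ in $x$ (inequality \eqref{partialR_x}), $R(x,y)<1$ \emph{strictly} for every $(x,y)\in\mc S$ with $y>\ov y$, hence $\dot y(t)<0$ for all $t\in(t_0,t_1]$ in your contradiction setup; then $y$ is strictly decreasing on $(t_0,t_1]$, so $y(t_1)<\ov y$, the desired contradiction. (Alternatively, cite Nagumo's theorem in a form valid for locally Lipschitz fields with the non-strict tangency condition, rather than re-deriving it.) With that one-line fix your proof is complete.
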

\begin{proof}
	Notice that if $R(1-\ov y, \ov y)\leq 1$, we have that $R(x, y)< 1$ for every $y> \ov y$ and $x\in [0, 1-y]$, due to~\eqref{partialR_x}. Consequently, for every initial condition $(x_0, y_0)\in\mc D_{\ov y}$, the corresponding solution $(x(t), y(t))$ of \eqref{eq:bf-sir-model}, has a peak of infection \eqref{y-max} which is necessarily less or equal to $\ov y$ since, either $y(t) \leq y_0\leq \ov y$ or the peak must necessarily happen at a location where $R= 1$, by Proposition \ref{prop:peak}, and $R(x(t),y(t)) <1$ for all $y(t) > \ov y$. 
\end{proof}\smallskip

\begin{figure}
	\includegraphics[scale=0.6]{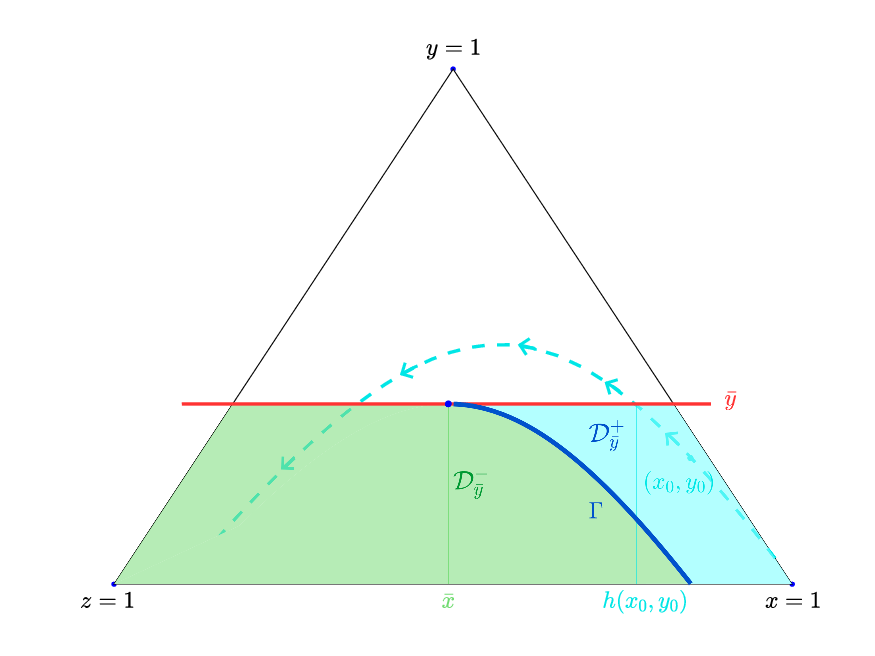}
	\caption{Phase portrait of the BF-SIR epidemic model \eqref{eq:bf-sir-model} on the simplex $\Delta$, with $b(x)=0.75x$, $a =1.5$ and $\gamma=0.05$. 
	}
	\label{fig:phase}
\end{figure}

Next, we show that if the condition in the previous lemma is not met, the state space can be divided in two regions, based on whether a trajectory that starts in $\mc D_{\ov y}$  would exceed the threshold, thus requiring interventions (see more details in Section \ref{sec:opt}), or not. 
Specifically, if $R(1-\ov y, \ov y)>1$, we aim at dividing $\mc D_{\ov y}$ in two regions:
\begin{enumerate}
	\item[(i)] $\mc D_{\ov y}^-$, such that if a solution starts therein, then $y(t) \leq \ov y$ for all $t \geq 0$,
	\item[(ii)] $\mc D_{\ov y}^+$, such that if a solution starts therein, then there exists a time instant $\hat t$ at which $y(\hat t) > \ov y$.
\end{enumerate}
To define these regions, we exploit the characterization of the peak of infection provided in \eqref{y-max}. More formally, let 
$$\Gamma = \{(x,y) \in \mc D_{\ov y}: R(x,y)> 1,\, \phi^{-1}(\psi(x,y)) = \ov y \}\,,$$
and 
\begin{align*}
	\mc D_{\ov y}^{-}= & \{(x,y)\in\mc D_{\ov y} : R(x,y)>1 ,\, \phi^{-1}(\psi(x,y))\!\leq  \ov y\} \, \cup\\[1pt]
	&\{(x,y) \in  \mc D_{\ov y} :R(x,y)\le  1 \text{ or } y =0\}\,,
\end{align*} 
$$ \mc D_{\ov y}^{+}\!=\!\!\{(x,y)\in\mc D_{\ov y} \!: R(x,y)\!>\!1, \phi^{-1}(\psi(x,y))\!>\! \ov y, y \!>\!0\}.$$
Note that 
$$ \mc D_{\ov y}^{-} \cup \mc D_{\ov y}^{+}  = \mc D_{\ov y}\,.$$
In Figure \ref{fig:phase}, the set $\Gamma$ is represented by the thick blue curve. The green and the light-blue areas are, respectively, the sets $\mc D_{\ov y}^{-}$ and $\mc D_{\ov y}^{+}$. 

We start by showing that $\mc D_{\ov y}^{-}$ is positively invariant for the BF-SIR epidemic model \eqref{eq:bf-sir-model}, which guarantees statement (i).
\begin{lemma}\label{lemma:invariant-D-}
	The set $\mc D_{\ov y}^-$ is positively invariant for the BF-SIR epidemic model \eqref{eq:bf-sir-model}. 
\end{lemma}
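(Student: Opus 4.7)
The plan is to take an arbitrary $(x_0,y_0)\in\mc D_{\ov y}^-$ and show that each point along the orbit $(x(t),y(t))$ again satisfies one of the defining conditions of $\mc D_{\ov y}^-$, using the invariant $\psi$ from Proposition~\ref{prop:invariant} together with the peak characterization of Proposition~\ref{prop:y-max}. First I would dispose of the degenerate cases: if $y_0=0$, then $\dot x=\dot y=0$ and the orbit is the stationary point $(x_0,0)\in\mc D_{\ov y}^-$; if $x_0=0$ and $y_0>0$, then $\dot x\equiv 0$ and $y$ decays monotonically, while $R(0,y(t))=0$ keeps the orbit in the $\{R\le 1\}$ component.

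Next I would treat the case $y_0>0$ and $R(x_0,y_0)\le 1$. By Proposition~\ref{prop:peak}(i) the function $y(t)$ is strictly decreasing; since $\dot y=(R(x,y)-1)\gamma y$ with $y(t)>0$ by Lemma~\ref{prop:wellposedness}(ii), strict monotonicity forces $R(x(t),y(t))\le 1$ for all $t\ge 0$. Combined with $y(t)\le y_0\le \ov y$, the orbit stays in the $\{R\le 1\}$ component of $\mc D_{\ov y}^-$. For the principal case $R(x_0,y_0)>1$ and $\phi^{-1}(\psi(x_0,y_0))\le \ov y$, one has $(x_0,y_0)\in\mc S_+$, and since $x(t),y(t)>0$ for all $t\ge 0$ the orbit remains in $\mc S_+$; by Proposition~\ref{prop:invariant}, $\psi(x(t),y(t))=\psi(x_0,y_0)$ is conserved along the orbit. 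At each time $t$ the orbit is then in $\mc D_{\ov y}^-$: either $R(x(t),y(t))\le 1$, placing it in the second component, or $R(x(t),y(t))>1$ and then $\phi^{-1}(\psi(x(t),y(t)))=\phi^{-1}(\psi(x_0,y_0))\le \ov y$ places it in the first. Proposition~\ref{prop:y-max}(ii) moreover gives $y(t)\le \ov y$, so the orbit never leaves $\mc D_{\ov y}$.

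The main obstacle is verifying that $\psi(x(t),y(t))$ lies in the image of $\phi$ whenever $R(x(t),y(t))>1$, so that $\phi^{-1}$ is well defined there; this is implicit in the proof of Proposition~\ref{prop:y-max}(ii), which shows that from any state with $R>1$ the subsequent peak lies in $(0,y_a)$ and equals $\phi^{-1}$ of the invariant value of $\psi$. A secondary subtlety is the behaviour at times where the orbit crosses the curve $\{R=1\}$, which is harmless since the two components of $\mc D_{\ov y}^-$ overlap there.
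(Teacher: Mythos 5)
Your proof is correct and follows essentially the same route as the paper's: dispose of $y_0=0$, then split into the cases $R(x_0,y_0)\le 1$ (where Proposition~\ref{prop:peak}(i) keeps $R\le 1$ and $y(t)\le y_0\le\ov y$) and $R(x_0,y_0)>1$ with $\phi^{-1}(\psi(x_0,y_0))\le\ov y$ (where conservation of $\psi$ from Proposition~\ref{prop:invariant} and the peak formula give $y(t)\le\ov y$ and membership in $\mc D_{\ov y}^-$). Your extra remarks on the sub-case $x_0=0$, on the well-definedness of $\phi^{-1}$ along the orbit, and on crossings of $\{R=1\}$ are points the paper glosses over, but they do not change the argument.
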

\begin{proof}
	Consider an initial state $(x_0, y_0) \in \mc D_{\ov y}^-$. Lemma~\ref{prop:wellposedness}(ii) ensures that if $y_0=0$, then $y(t)=0$ for all $t \geq 0$, thus the trajectory remains in $D_{\ov y}^-$. Hence, assume in the following that $y_0 >0$. 
	
	By definition of the set $D_{\ov y}^-$, we can distinguish two cases: either $R(x_0,y_0)>1$ with $\phi^{-1}(\psi(x_0,y_0))\leq  \ov y$ or $R(x_0,y_0)\le 1$. 
	First assume that $R(x_0,y_0)>1$ and $\phi^{-1}(\psi(x_0,y_0))\leq  \ov y$. Since $\psi(x,y)$ is an invariant of motion in  $\mc S_+$ by Proposition~\ref{prop:invariant}, $x_0>0$ and $y_0>0$, it follows that $\psi(x_0, y_0) = \psi(x(t), y(t))$ for all $t \geq 0$ and then, 
	$$\phi^{-1}(\psi(x(t),y(t))) = \phi^{-1}(\psi(x_0,y_0)) \leq \ov y\,,$$
	for all $t \geq 0$. Moreover, by \eqref{y-max}, $y(t) \leq \ov y$ for all $t \geq 0$.
	This means that the solution $(x(t), y(t))$ will remain in $\mc D_{\ov y}^-$ for  all $t \geq 0$.
	
	If $R(x_0,y_0)\le 1$, 
	Proposition~\ref{prop:peak}(i) and \eqref{eq:bf-sir-model} with $y_0>0$ imply that $R(x(t), y(t)) < 1$ for all $t \geq0$ and \eqref{y-max} implies that $y(t) \leq y_0 \leq \ov y$ for all $t \geq0$. Hence, also in this case the solution $(x(t), y(t))$ will remain in $\mc D_{\ov y}^-$ for all $t \geq 0$.
\end{proof}

The next result provides information about the dynamical behavior on the set $\mc D_{\ov y}^+$, and guarantees statement (ii). 
\begin{lemma}\label{lemma:y-hat}
	Consider the BF-SIR epidemic model \eqref{eq:bf-sir-model} with initial state $(x_0, y_0)$ in $\mc D_{\ov y}^{+}$. There exists a time instant $\hat t$ such that $y(\hat t) > \ov y$.
\end{lemma}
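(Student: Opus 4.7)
The plan is to read off the conclusion directly from Proposition~\ref{prop:y-max}(ii), using the defining conditions of the region $\mc D_{\ov y}^{+}$. By the definition of $\mc D_{\ov y}^{+}$, the initial state $(x_0,y_0)$ satisfies $y_0>0$, $R(x_0,y_0)>1$, and $\phi^{-1}(\psi(x_0,y_0)) > \ov y$. The first two conditions are exactly the hypotheses needed to invoke Proposition~\ref{prop:y-max}(ii).

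First I would apply Proposition~\ref{prop:y-max}(ii) to obtain a finite peak time $\hat t > 0$ at which the infection curve attains the value
\[
y(\hat t) = \phi^{-1}(\psi(x_0,y_0)).
\]
Then I would combine this equality with the third condition from the definition of $\mc D_{\ov y}^{+}$, namely $\phi^{-1}(\psi(x_0,y_0)) > \ov y$, to conclude that $y(\hat t) > \ov y$, which is the claim.

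No real obstacle arises here: the lemma is essentially a restatement of the definition of $\mc D_{\ov y}^{+}$ in dynamical terms, with Proposition~\ref{prop:y-max}(ii) providing the crucial bridge between the algebraic peak formula $\phi^{-1}(\psi(x_0,y_0))$ and the actual maximum of the trajectory $y(t)$. The only minor care needed is to note that $(x_0,y_0)\in\mc S_+$ (required for $\psi$ and $\phi$ to be defined along the trajectory), which follows from $y_0>0$ together with the fact that $R(x_0,y_0)>1$ forces $x_0>0$, since $R(0,y)=0$ by \eqref{R}.
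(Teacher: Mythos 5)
Your proof is correct and follows essentially the same route as the paper: both read the conditions $R(x_0,y_0)>1$ and $\phi^{-1}(\psi(x_0,y_0))>\ov y$ off the definition of $\mc D_{\ov y}^{+}$, invoke Proposition~\ref{prop:y-max}(ii) (the paper routes through Proposition~\ref{prop:peak}(ii) first, but this is folded into the statement of Proposition~\ref{prop:y-max}(ii) anyway) to identify $y(\hat t)=\phi^{-1}(\psi(x_0,y_0))$, and conclude. Your added remark that $R(x_0,y_0)>1$ forces $x_0>0$ so that $(x_0,y_0)\in\mc S_+$ is a fine small clarification the paper leaves implicit.
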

\begin{proof}
	From $(x_0, y_0)$ in $\mc D_{\ov y}^{+}$, it follows that $\phi^{-1} (\psi(x_0,y_0))> \ov y$ and $R(x_0,y_0)> 1$. Proposition~\ref{prop:peak}(ii) implies that there exists a time instant $\hat t$ such that $y(\hat t) $ is the maximum peak of infection. From Proposition~\ref{prop:y-max} and \eqref{y-max} it follows that 
	$$y(\hat t) = \phi^{-1} (\psi(x_0,y_0))> \ov y,$$
	thus proving the result.
\end{proof}\smallskip

We have shown that for initial conditions in $\mc D_{\ov y}^{+}$ an intervention is required to keep the infection curve below the threshold constraint. We discuss this in details in Section~\ref{sec:opt}. For that discussion, it will be useful to characterize the fraction of susceptibles for the dynamics \eqref{eq:bf-sir-model} with an initial state $(x_0, y_0)$ in $\mc D_{\ov y}^{+}$ when the fraction of infected reaches the threshold $\ov y$ for the first time. We define this value as $h(x_0, y_0)$. Note that such a point exists because $y(t)$ is continuous with initial state $y_0 \leq \ov y$ and, by Lemma~\ref{lemma:y-hat}, there exists a time instant $\hat t$ such that $y(\hat t) > \ov y$. We also denote $\ov x$ as the fraction of susceptibles such that $R(\ov x, \ov y) = 1$ and note that $\ov x \in (0,1-\ov y)$ because $R(1-\ov y, \ov y)>1$ and $R_x(x,y)>0$, by \eqref{partialR_x}.
Values of $h(x_0, y_0)$ and $\ov x$ can be visualized in Figure~\ref{fig:phase}, where the light-blue dashed line is a solution with initial condition $(x_0, y_0) \in \mc D_{\ov y}^+$.
\begin{lemma}\label{lemma:h}
	Let $h$ be as defined above. Then, for all $(x,y) \in \mc D_{\ov y}^{+}$,
	\begin{align}
		&h(x,y) \in (\ov x, 1-\ov y], \label{hyp-h-1}\\
		&h(x,y) \leq x,\label{hyp-h-2}\\
		&h(x, \ov y) = x.\label{hyp-h-3}
	\end{align}
	Moreover, $h$ is $\mc C^1$ on $\mc D_{\ov y}^{+}$ and
	\be\label{eq:hx} h_x(x,y) =  \frac{\rho(x,y)e^{g(h(x,y))}}{\rho(h(x,y),\ov y)e^{g(x)}}\,,\ee
	\be\label{eq:hy} h_y(x,y) = \frac{e^{g(h(x,y))}}{\rho(h(x,y),\ov y)e^{g(x)}}\,,\ee
	for every $(x,y)$ in $\mc D_{\ov y}^{+}$. 
\end{lemma}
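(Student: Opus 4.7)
The plan is to exploit that $\psi$ is conserved along trajectories: if $(x_0, y_0) \in \mc D_{\ov y}^+$ and $t_1 \geq 0$ denotes the first hitting time of the set $\{y = \ov y\}$ (which exists by Lemma~\ref{lemma:y-hat} and the continuity of $y(\cdot)$, with $t_1=0$ iff $y_0 = \ov y$), then by definition $h(x_0,y_0) = x(t_1)$, and Proposition~\ref{prop:invariant} yields the key implicit identity
$$\psi\bigl(h(x_0, y_0), \ov y\bigr) = \psi(x_0, y_0).$$
All claims will be derived from this identity.

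The bounds are almost immediate. Inequality \eqref{hyp-h-2} follows from $t_1 \geq 0$ and the monotonicity of $x(t)$ in Lemma~\ref{prop:wellposedness}(i); \eqref{hyp-h-3} corresponds to $t_1 = 0$; and the upper bound $h(x,y) \leq 1-\ov y$ is simply the simplex constraint at $t_1$. The delicate point is the strict lower bound $h > \ov x$. Since $(x_0,y_0) \in \mc D_{\ov y}^+$, Lemma~\ref{lemma:y-hat} produces a later time at which $y$ strictly exceeds $\ov y$, forcing the first crossing at $t_1$ to be transversal, i.e., $\dot y(t_1) > 0$. Were instead $\dot y(t_1) = 0$, Proposition~\ref{prop:peak}(ii) would make $t_1$ the unique peak and $y$ would never again reach $\ov y$, a contradiction. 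Therefore $R(h(x_0,y_0), \ov y) > 1 = R(\ov x, \ov y)$, and the strict monotonicity of $R$ in $x$ from \eqref{partialR_x} gives $h(x_0,y_0) > \ov x$.

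For $\mc C^1$-regularity and the derivative formulas, I would apply the implicit function theorem to $F(x,y,\xi) := \psi(\xi, \ov y) - \psi(x,y)$ on $\mc D_{\ov y}^+$. By Proposition~\ref{prop:invariant},
$$F_\xi\bigl(x,y,h(x,y)\bigr) = \psi_x(h(x,y),\ov y) = e^{-g(h(x,y))}\,\rho(h(x,y), \ov y),$$
which is strictly positive, since the already-established $h > \ov x$ implies $\rho(h, \ov y) > 0$. This delivers $\mc C^1$ regularity, and implicit differentiation of the identity above with respect to $x$ and $y$, substituting $\psi_x = e^{-g}\rho$ and $\psi_y = e^{-g}$ from \eqref{eq:psi-partial}, produces the stated formulas \eqref{eq:hx} and \eqref{eq:hy} after a short rearrangement.

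The main obstacle is the strict lower bound $h > \ov x$, which plays a dual role: it completes \eqref{hyp-h-1} and simultaneously guarantees the non-degeneracy condition $F_\xi \neq 0$ needed for the implicit function theorem. Its justification hinges on combining the unimodality of the infection curve (Proposition~\ref{prop:peak}) with the strict-exceedance clause in the definition of $\mc D_{\ov y}^+$; once that is in place, everything else reduces to routine implicit differentiation.
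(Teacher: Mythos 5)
Your proposal is correct and follows essentially the same route as the paper: both rest on the conserved-quantity identity $\psi(h(x,y),\ov y)=\psi(x,y)$, establish $h>\ov x$ by noting that $R>1$ at the first crossing of $y=\ov y$ (which occurs strictly before the peak), and obtain $\mc C^1$ regularity and the derivative formulas by inverting $\xi\mapsto\psi(\xi,\ov y)$ via the implicit function theorem. The paper phrases the last step as inverting the one-variable map $\theta(x)=\psi(x,\ov y)$, but this is the same computation as your implicit differentiation of $F(x,y,\xi)$.
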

\begin{proof} 
	Fix an initial state $(x_0,y_0) \in \mc D_{\ov y}^{+}$ for \eqref{eq:bf-sir-model}, by Lemma~\ref{lemma:y-hat}, there exists a time instant $\hat t$ such that $y(\hat t)$ is the peak of infection and $y(\hat t)> \ov y$. Hence, by Proposition~\ref{prop:peak}(ii), $R(x(t), y(t))>1$ for all $t \in [0,\hat t)$. By continuity, there exists a time instant $\ov t < \hat t$ such that $y(\ov t) = \ov y$ and $x(\ov t)= h(x_0,y_0)$. This means $R(x(\ov t), y(\ov t)) = R(h(x_0, y_0), \ov y)>1$. By \eqref{partialR_x} and $R(\ov x, \ov y)= 1$ it follows that $h(x_0,y_0)> \ov x$ and since $(h(x_0, y_0), \ov y) \in \mc S_+$, then $h(x_0,y_0) \leq 1-\ov y$. This proves \eqref{hyp-h-1}.
	
	From~Lemma \ref{prop:wellposedness}(i), $\dot{x}(t) \leq 0$ for all $t \geq 0$. Hence $x(\ov t)\!\leq \! x_0$, which means $h(x_0,y_0) \leq x_0$, proving \eqref{hyp-h-2}.
	
	Note that \eqref{hyp-h-3} follows directly from the definition of $h$.
	
	Define now the function $\theta :(\ov x,1-\ov y] \to \mathbb{R}$ as 
	\be \label{eq:theta}\theta(x) = \psi(x,\bar y)\,,\qquad\forall x\in(\ov x,1-\ov y]\,,\ee
	where $\psi$ is defined in \eqref{psi}. Since $\psi$ is $\mc C^1$ on $\mc S_+$, so is $\theta$ on $(\ov x,1-\ov y]$. 
	From the second equation in \eqref{eq:psi-partial} evaluated in $(x,\bar{y})$, we get
	\begin{align}
		\theta'(x) &= \psi_x(x,\bar y) =\rho(x, \bar{y})e^{-g(x)}> 0 \label{eq:theta-der}
	\end{align}
	for every $x$ in $(\ov x,1-\ov y]$. This follows from the fact that $R(x,\ov y)>1$ by $R(\ov x, \ov y)=1$ and \eqref{partialR_x}, which implies $\rho(x,\bar y)>0$. Therefore, there exists a continuous inverse function $\theta^{-1}: (\psi(\ov x,\ov y),\psi(1-\ov y,\ov y)] \to (\ov x,1-\ov y]$ that is $\mc C^1$ on $(\psi(\ov x,\ov y),\psi(1-\ov y,\ov y)]$, from the implicit function theorem~\cite[Theorem 7.1]{CanutoTabacco2008}. 
	Since $\psi(x,y)$ is constant along the orbits of~\eqref{eq:bf-sir-model} by Proposition~\ref{prop:invariant}, it follows that for all $(x,y)$ in $\mc D_{\ov y}^{+}$
	\be\label{eq:h-psi} \psi(x,y) = \psi(h(x,y), \ov y)= \theta(h(x,y))\,,\ee
	since $h(x,y) \in (\ov x, 1-\ov y]$ by \eqref{hyp-h-1}.
	Hence, for all $(x,y)$ in $\mc D_{\ov y}^{+}$
	\be\label{eq:h-def1}h(x,y) =  \theta^{-1}(\psi(x,y))\,.\ee
	
	
	Notice that $h$ is $\mc C^1$ on $\mc D_{\ov y}^{+}$ by \eqref{eq:h-def1}, since so are $\psi(x,y)$ on $\mc S_+$ and $\theta^{-1}$ on $(\psi(\ov x,\ov y),\psi(1-\ov y,\ov y)]$ and by $\psi(x,y) \in (\psi(\ov x,\ov y),\psi(1-\ov y,\ov y)]$. 
	Hence, computing the partial derivatives for both sides of \eqref{eq:h-def1} yields 
	\be \label{eq:partial_h} h_x(x,y) = \frac{\psi_x(x,y)}{\theta'(h(x,y))}\,,\quad h_y(x,y) = \frac{\psi_y(x,y)}{\theta'(h(x,y))}\,.\ee
	Substituting \eqref{eq:psi-partial} and \eqref{eq:theta-der} in \eqref{eq:partial_h} we get \eqref{eq:hx} and \eqref{eq:hy}. 
\end{proof}\smallskip

\section{Optimal control problem}\label{sec:opt}
Motivated by the previous analysis, we introduce an exogenous control $u: \R_+ \to [0,1]$, assumed to be a piecewise continuous function, which is determined by a social planner aiming to limit the spread of the disease by restricting social interactions.  
The controlled model is governed by the following system:
\be\label{control-system}\dot x=-(1-u)\frac{b(x)}{1+ a y }xy\,,\qquad\dot y=(1-u)\frac{b(x)}{1+ a y }xy-\gamma y\,.\ee

The following result establishes well-posedness of the initial value problems for the controlled BF-SIR epidemic model \eqref{control-system}.
\begin{lemma}\label{lemma:well-posedness-control} 
	For every initial state $(x_0, y_0)$  in $\mc S$ and every piecewise continuous control signal $u(t)$, the controlled BF-SIR epidemic model \eqref{control-system} admits a unique piecewise-$\mc C^1$ solution $(x(t), y(t)) \in \mc S$ for $t\ge0$.
\end{lemma}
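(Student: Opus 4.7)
The plan is to combine Picard--Lindel\"of on each interval of continuity of $u$ with a forward-invariance argument for $\mc S$, and then concatenate across jumps. Fix a piecewise continuous control $u:[0,+\infty)\to[0,1]$ and let $0=t_0<t_1<t_2<\cdots$ be its jump points (assumed without finite accumulation, as is standard for piecewise continuous signals). On each open interval $(t_k,t_{k+1})$ the right-hand side of \eqref{control-system} is continuous in $t$ and continuously differentiable in $(x,y)$ on an open neighborhood of $\mc S$, since $b$ is $\mc C^1$ and $1+ay\geq 1$ on $\mc S$; in particular it is Lipschitz in $(x,y)$ uniformly on compact subsets. Classical Picard--Lindel\"of then yields a unique local $\mc C^1$ solution starting from the value at $t_k$.

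Next, I would check that $\mc S$ is positively invariant under \eqref{control-system}, regardless of the pointwise value $u(t)\in[0,1]$. This relies on three elementary observations: (a) $\dot x=0$ whenever $x=0$, (b) $\dot y=0$ whenever $y=0$, and (c) $\dot x+\dot y=-\gamma y\leq 0$ for every $(x,y)\in\mc S$. Observations (a) and (b) together with uniqueness keep $x(t)$ and $y(t)$ non-negative, and (c) makes $x(t)+y(t)$ non-increasing, which prevents the trajectory from leaving $\mc S$ across the hypotenuse $\{x+y=1\}$. Hence every local solution satisfies $(x(t),y(t))\in\mc S$ for as long as it is defined.

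Because $\mc S$ is compact, the local solution on $[t_k,t_{k+1})$ cannot blow up and therefore extends continuously up to the endpoint $t_{k+1}$; using this limiting value as the new initial condition, I restart Picard--Lindel\"of on $[t_{k+1},t_{k+2})$. Iterating across all jump points of $u$ produces a globally defined solution on $[0,+\infty)$ which is continuous everywhere, takes values in $\mc S$, is $\mc C^1$ on every open interval $(t_k,t_{k+1})$, and is the unique such solution. This is exactly the claimed piecewise-$\mc C^1$ regularity; the one-sided derivatives at the jump points of $u$ need not match, which is consistent with the statement. The only nontrivial step is the boundary invariance argument, and even that is painless here because the vector field vanishes identically on both coordinate axes and because the outward component on the hypotenuse is controlled by the simple identity $\dot x+\dot y=-\gamma y\leq 0$; no tangent-cone analysis or Nagumo-type computation is actually needed.
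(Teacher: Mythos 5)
Your argument is correct and follows essentially the same route as the paper: Picard--Lindel\"of on each interval of continuity of $u$, concatenation of the pieces across the jump points, and positive invariance of $\mc S$ to rule out blow-up. The only difference is cosmetic: the paper establishes non-negativity of $x(t)$ and $y(t)$ by writing each as its initial value times an exponential of an integral, whereas you invoke the vanishing of the relevant component of the vector field on each axis together with uniqueness; both are standard and valid. One small imprecision in your closing remark: the vector field does not vanish \emph{identically} on the axis $\{x=0\}$ (there $\dot y=-\gamma y\neq 0$ in general); what you need, and what you correctly state earlier as observations (a) and (b), is only that $\dot x=0$ on $\{x=0\}$ and $\dot y=0$ on $\{y=0\}$.
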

\begin{proof}
	See Appendix. 
\end{proof}

The goal is to minimize the economic cost of the control while satisfying the constraint $y(t) \leq \ov y$ for all $t \geq 0$.
For a given initial condition $(x_0, y_0)$ in $\mc D_{\ov y}$, we consider the set $\mc U_{(x_0, y_0,\bar{y})}$ of piecewise continuous control signals $u: \mathbb{R}_+\to[0,1] $ such that the solution $(x(t),y(t))$ of the controlled BF-SIR epidemic model \eqref{control-system} with initial state $(x_0, y_0)$ and control $u$ satisfies 
\be\label{alwaysinD}(x(t), y(t)) \in \mc D_{\ov y}\,,\qquad \forall t \geq 0\,.\ee
Notice that the set  $\mc U_{(x_0, y_0,\bar{y})}$ is never empty since it contains the trivial control $u\equiv1$.
We model the cost through a functional $J : \mc U_{(x_0,y_0, \bar{y})} \to [0,+\infty]$ defined by
$$J(u) = \int_{0}^{+\infty} u(t) \, dt\,,$$
and consider the optimal control problem 
\be\label{control-problem} V^*(x_0,y_0)=\min_{u \in \mc U_{(x_0,y_0, \bar{y})}} J(u)\,.\ee

This type of optimal control problem has been widely studied in the literature \cite{Miclo.ea:2022, ACEMOGLU2024111391}. Within the framework of the classical epidemic SIR model, it has been shown that the optimal control is the so-called \emph{filling-the-box strategy}, which applies the minimum restrictions necessary to satisfy the threshold constraint. Inspired by such results, we consider the following control policy in feedback form: 
\be\label{feedback}u^*(t) = \mu(x^*(t), y^*(t)),\qquad \forall t\geq 0\,,\ee 
where $(x^*(t), y^*(t))$ is the solution of the controlled BF-SIR epidemic model \eqref{control-system} with initial state $(x_0, y_0)$ and control $u^*$, and $\mu:\mc S\to[0,1]$ is the feedback policy defined by 
\be\label{eq:mu}
\mu(x,y) = \begin{cases}
	0 & \text{ if } y < \bar{y} \text{ or } y = \ov y,\, x \leq \ov x\\
	\ds\rho(x,\ov y) & \text{ if } y = \bar{y},\,x > \ov x \,. 
\end{cases} \ee
Note that \eqref{feedback} guarantees $\dot{y}^*(t) \leq 0$ at any time instant $t$ such that $y^*(t) = \ov y$. 
\MA{Hence, it represents a feasible control. Assessing the optimality of \eqref{feedback} for the BF-SIR epidemic model is an open problem.}

Leveraging the results in Section~\ref{sec:threshold}, we characterize the cost for the policy \eqref{feedback}-\eqref{eq:mu} for the controlled BF-SIR epidemic model~\eqref{control-system}.
We write $J(u|x_0, y_0)$ to stress the role of the initial condition.
\begin{proposition}\label{prop:Ju*}
	For every $(x,y)$ in $\mc D_{\ov y}$, 
	\be\label{Ju*} \!\! J(u^*| x_0 ,y_0) =  \begin{cases}
		0 \!\!& \text{if }  (x_0,y_0) \in \mc D_{\ov y}^{-} \\
		\displaystyle	\frac1{\gamma \bar{y}}\int_{\bar{x}}^{h(x_0,y_0)}  \!\! \rho(x,\bar y) \de x	 \!\!& \text{if }   (x_0,y_0) \in \mc D_{\ov y}^{+}.
	\end{cases}\ee
\end{proposition}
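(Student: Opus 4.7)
The plan is to split by cases according to whether $(x_0,y_0)\in\mc D_{\ov y}^-$ or $\mc D_{\ov y}^+$. In the first case, Lemma~\ref{lemma:invariant-D-} guarantees that the trajectory of the uncontrolled BF-SIR model stays in $\mc D_{\ov y}^-$ for all $t\ge 0$, and on that set the feedback $\mu$ in \eqref{eq:mu} vanishes identically: indeed either $y<\ov y$, or the orbit touches $\{y=\ov y\}$ at its peak, in which case the identities $\rho(\tilde x(\ov y),\ov y)=0$ and $\tilde x(\ov y)=\ov x$ (used in the proof of Proposition~\ref{prop:y-max}) force the contact point to be $(\ov x,\ov y)$, where $x\le \ov x$ and hence $\mu=0$. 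Thus the unique solution of \eqref{control-system} with input $u^*$ coincides with the uncontrolled one and $u^*(t)\equiv 0$, giving $J(u^*|x_0,y_0)=0$.

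For $(x_0,y_0)\in\mc D_{\ov y}^+$ the strategy is to decompose the controlled trajectory into three phases. \emph{Phase 1:} as long as $y(t)<\ov y$, the feedback is zero and the system follows the uncontrolled dynamics \eqref{eq:bf-sir-model}; Lemma~\ref{lemma:y-hat} and continuity of $y$ yield a first time $t_1\ge 0$ with $y(t_1)=\ov y$, and by definition of $h$ in Lemma~\ref{lemma:h} we have $x(t_1)=h(x_0,y_0)\in(\ov x,1-\ov y]$. \emph{Phase 2 (sliding):} for $t\ge t_1$ with $x(t)>\ov x$, set $u^*(t)=\rho(x(t),\ov y)=1-1/R(x(t),\ov y)$. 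Substituting into \eqref{control-system} and using this identity gives
$$ \dot y = (1-\rho(x,\ov y))\,\frac{b(x)}{1+a\ov y}\,x\ov y - \gamma\ov y = 0,\qquad \dot x = -\gamma\ov y, $$
so the trajectory slides along $\{y=\ov y\}$ and $x$ decreases at constant rate $\gamma\ov y$, reaching $(\ov x,\ov y)$ at the finite time $t_2 = t_1+(h(x_0,y_0)-\ov x)/(\gamma\ov y)$. \emph{Phase 3:} for $t\ge t_2$ the state $(\ov x,\ov y)$ lies in $\mc D_{\ov y}^-$ (since $R(\ov x,\ov y)=1$), so the first-case argument applies and $u^*\equiv 0$ thereafter.

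Since $u^*$ is supported on $[t_1,t_2]$ and $\dot x=-\gamma\ov y$ on that interval, the change of variable $t\mapsto x$ yields
$$ J(u^*|x_0,y_0)=\int_{t_1}^{t_2}\rho(x(t),\ov y)\,\de t = \frac{1}{\gamma\ov y}\int_{\ov x}^{h(x_0,y_0)}\rho(x,\ov y)\,\de x, $$
which is the claimed formula. The most delicate point I anticipate is making the sliding phase rigorous: the feedback $\mu$ is discontinuous across the threshold $\{y=\ov y\}$, so one must argue that the unique piecewise-$\mc C^1$ solution provided by Lemma~\ref{lemma:well-posedness-control} actually tracks the boundary rather than slipping back into $\{y<\ov y\}$. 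The cleanest route is to construct the candidate $(x(t),\ov y)$ with $x$ affine on $[t_1,t_2]$, verify by direct substitution that it solves \eqref{control-system} with the piecewise-continuous input defined by \eqref{feedback}-\eqref{eq:mu}, and then invoke the uniqueness statement in Lemma~\ref{lemma:well-posedness-control} to conclude that this is the trajectory realized by the closed-loop dynamics.
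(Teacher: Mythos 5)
Your proposal is correct and follows essentially the same route as the paper's proof: the same case split via Lemma~\ref{lemma:invariant-D-}, the same three-phase decomposition with the sliding dynamics $\dot x=-\gamma\ov y$, $\dot y=0$ on $\{y=\ov y\}$, and the same change of variables $t\mapsto x$ yielding $\frac1{\gamma\ov y}\int_{\ov x}^{h(x_0,y_0)}\rho(x,\ov y)\,\de x$. The only addition is your closing remark on rigorously justifying the sliding mode under the discontinuous feedback by verifying the explicit candidate trajectory and invoking Lemma~\ref{lemma:well-posedness-control}; the paper glosses over this point, so that extra care is welcome but does not change the argument.
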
\medskip
\begin{proof}
	Fix $(x_0,y_0)$ in $\mc D_{\ov y}$, and let $(x^*(t),y^*(t))$, for $t\ge0$, be the solution of controlled behavioral-feedback SIR epidemic model \eqref{control-system} with initial state $(x_0, y_0)$ and feedback control \eqref{feedback}--\eqref{eq:mu}. We next prove that if $(x,y) \in \mc D_{\ov y}^-$, then $\mu(x,y) =0$. This is clearly true for $y < \ov y$. For $y = \ov y$, we have that $(x,\ov y)$ belonging to $\mc D_{\ov y}^-$ implies $R(x,\ov y)<1$.
	From $R(\ov x, \ov y)=1$ and \eqref{partialR_x}, we have that $R(x, \ov y) <1$ implies $x < \ov x$. Hence, $\mu(x, \ov y)=0$.
	By Lemma \ref{lemma:invariant-D-}, if $(x_0,y_0)\in\mc D_{\ov y}^-$, then $(x^*(t),y^*(t))\in\mc D_{\ov y}^-$. This proves that $J(u^*|x_0,y_0) = 0$ for every $(x_0,y_0)$ in $\mc D_{\bar y}^-$.
	On the other hand, if $(x_0,y_0)$ in $\mc D_{\ov y}^+$, let $t^*>0$ be the first time at which $y^*(t^*) = \ov y$. Recall that the corresponding fraction of susceptibles is $x^*(t^*) = h(x_0, y_0) > \ov x$, by Lemma \ref{lemma:h}. Under the control $u^*$, the cost is $0$ for $t \in [0, t^*]$.
	At $t^*$, the dynamics becomes 
	$$\dot x^*=-(1-u^*)\frac{b(x)}{1+ a \ov y }x\ov y=-\gamma\ov y,\quad \dot y^* = 0\,,$$
	hence keeping $y^*(t)=\ov y$, until reaching the point $(x^*(\ov t),y^*(\ov t))=(\ov x,\ov y)$ at some time $\ov t$. 
	Since $R(\ov x, \ov y)=1$, by Proposition \eqref{prop:peak}(i), $y(t) < \ov y$ and $u^*(t)=0$ with zero cost for all $t \geq \ov t$.
	The total cost between the two time instants $t^*$ and $\ov t$ is given by 
	\begin{align*}
		\int_{t^*}^{\ov t}u^*(t)\de t&=\int_{t^*}^{\ov t}\rho(x^*(t),y^*(t))\de t \\
		&=\int_{t^*}^{\ov t}\rho(x^*(t),\ov y)\de t\\
		&=\frac1{\gamma \bar{y}}\int_{\bar{x}}^{h(x_0,y_0)}\rho(x,\ov y)\de x\,.
	\end{align*}
\end{proof} \smallskip

The next result characterizes the sensitivity of the cost of $u^*$ with respect to the initial condition. 
\begin{proposition}\label{prop:sensitivity}
	Consider the cost \eqref{Ju*}. Then, for all $(x,y) \in\mc D_{\ov y}^{+}$,
	\begin{align}
		\ds J_{y}(u^*|x,y)&=\frac{1}{\gamma\ov y}e^{\int_x^{h(x,y)}\frac{\gamma a}{sb(s)}\de s}\,,\label{eq:partial-jy}\\
		\ds J_{x}(u^*|x,y)& =\rho(x,y)J_{y}(u^*,x,y) \label{eq:partial-jx}\,.
	\end{align}
\end{proposition}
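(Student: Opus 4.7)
The plan is to compute $J_x$ and $J_y$ directly by differentiating the closed-form expression for $J(u^*|x,y)$ given in Proposition \ref{prop:Ju*} via the Leibniz rule, and then to simplify the resulting expressions using the formulas for $h_x$ and $h_y$ from Lemma \ref{lemma:h}.

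\textbf{Setup.} Fix $(x,y) \in \mc D_{\ov y}^{+}$. From Proposition \ref{prop:Ju*},
\[
J(u^*|x,y) = \frac{1}{\gamma \ov y}\int_{\ov x}^{h(x,y)}\rho(s,\ov y)\,\de s.
\]
Since $h$ is $\mc C^1$ on $\mc D_{\ov y}^{+}$ by Lemma \ref{lemma:h} and $s \mapsto \rho(s,\ov y)$ is continuous on $(\ov x, 1-\ov y]$, the Leibniz rule yields
\[
J_y(u^*|x,y) = \frac{1}{\gamma\ov y}\rho(h(x,y),\ov y)\,h_y(x,y),\qquad J_x(u^*|x,y) = \frac{1}{\gamma\ov y}\rho(h(x,y),\ov y)\,h_x(x,y).
\]

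\textbf{Computation of $J_y$.} I would substitute the expression for $h_y$ from \eqref{eq:hy} into the first formula above. The factor $\rho(h(x,y),\ov y)$ cancels exactly between numerator and denominator, leaving
\[
J_y(u^*|x,y) = \frac{1}{\gamma\ov y}\,\frac{e^{g(h(x,y))}}{e^{g(x)}} = \frac{1}{\gamma\ov y}\,e^{g(h(x,y))-g(x)}.
\]
I would then use the definition \eqref{eq:g} to observe that
\[
g(h(x,y))-g(x) = \int_{1}^{h(x,y)}\frac{\gamma a}{sb(s)}\,\de s - \int_{1}^{x}\frac{\gamma a}{sb(s)}\,\de s = \int_{x}^{h(x,y)}\frac{\gamma a}{sb(s)}\,\de s,
\]
which yields \eqref{eq:partial-jy}.

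\textbf{Computation of $J_x$.} Substituting \eqref{eq:hx} into the Leibniz formula for $J_x$, the same cancellation of $\rho(h(x,y),\ov y)$ occurs, producing an extra factor of $\rho(x,y)$ compared to the $J_y$ computation. Comparing with the expression for $J_y$ already obtained, this gives \eqref{eq:partial-jx} immediately.

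\textbf{Expected obstacles.} There is no real technical obstacle: the bulk of the work has been front-loaded into Lemma \ref{lemma:h}, whose proof established the $\mc C^1$ regularity of $h$ and the exact formulas for $h_x$ and $h_y$. The only point meriting brief care is to confirm that the Leibniz rule applies (continuity of $\rho(\cdot,\ov y)$ on $[\ov x, 1-\ov y]$ and $\mc C^1$ dependence of the upper limit $h(x,y)$), and to verify that the algebraic simplification indeed cancels $\rho(h(x,y),\ov y)$, which is nonzero on $(\ov x, 1-\ov y]$ since $R(h(x,y),\ov y)>1$ by \eqref{hyp-h-1} and \eqref{partialR_x}.
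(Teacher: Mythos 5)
Your proposal is correct and follows essentially the same route as the paper: differentiate the integral form of $J(u^*|x,y)$ with respect to the upper limit $h(x,y)$, substitute \eqref{eq:hx}--\eqref{eq:hy}, and cancel $\rho(h(x,y),\ov y)$ to obtain \eqref{eq:partial-jy} and \eqref{eq:partial-jx}. Your added remarks on the applicability of the Leibniz rule and the nonvanishing of $\rho(h(x,y),\ov y)$ are sound but do not change the argument.
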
\medskip
\begin{proof}
	From a direct differentiation of the expression of $J$ on $\mc D_{\ov y}^{+}$ and \eqref{eq:hy}, \eqref{eq:hx}, \eqref{eq:g} we get
	\begin{align*}
		J_{y}(u^*|x,y) &= \frac{\rho(h(x,y),\ov y)}{\gamma \bar{y}}h_{y}(x,y) \\
		&= \frac{1}{\gamma \bar{y}}e^{g(h(x,y))-g(x)}= \frac{1}{\gamma\ov y}e^{\int_x^{h(x,y)}\gamma a/(sb(s))\de s} ,
	\end{align*}
	and 
	\begin{align*}
		J_{x}(u^*|x,y) & = \frac{\rho(h(x,y),\ov y)}{\gamma \bar{y}}h_{x}(x,y)  \\
		& =  \frac{\rho(x,\ov y)}{\gamma \bar{y}}e^{g(h(x,y))-g(x)}= \rho(x,y) J_{y}(u^*|x,y).
	\end{align*}
\end{proof}

From \eqref{eq:partial-jy}, it follows that $J_{y}(u^*|x,y) > 0$ for all $(x,y) \in \mc D_{\ov y}^{+}$. Note also that, since $h(x,y) \leq x$ by Lemma \ref{lemma:h} and $g(x)$ is increasing in $x$, then $g(h(x,y)) \leq g(x)$ and 
\be \label{eq:jy-1} J_{y}(u^*|x,y) ) \leq \frac{1}{\gamma \ov y}\,,\ee
for all $(x,y) \in \mc D_{\ov y}^{+}$, and, since $h(x, \ov y) = x$ by Lemma \ref{lemma:h}, 
\be \label{eq:jy-2}J_{y}(u^*|x,\ov y) = \frac{1}{\gamma \ov y}\,,\ee
for all $(x,y) \in \mc D_{\ov y}^{+}$ with $y = \ov y$.
Overall, Propositions \ref{prop:Ju*} and \ref{prop:sensitivity} prove that $J(u^*|x,y)$ is differentiable for all $(x,y) \in \mc S \setminus\{\Gamma \cup \{y=0\}\}$. In this set, the equations above imply that $J(u^*|x,y)$ satisfies the Hamilton-Jacobi-Bellman equation \cite{liberzon}, that is, the minimum of 
\begin{align*}J_x(u^*| x,y)\dot{x} + &J_{y}(u^*|x,y)  \dot{y} +u = \\
	& = (\rho(x,y)\dot{x} + \dot{y} )J_{y}(u^*|x,y) )  + u \\
	&= (1-  \gamma y J_{y}(u^*|x,y) )) u 
\end{align*}
is $0$, by \eqref{eq:jy-1}, and $u^*$ is the minimizer. The last fact follows since either $y < \ov y$, which implies by \eqref{feedback}--\eqref{eq:mu} that $u^* = 0$, or $y = \ov y$, which implies by \eqref{eq:jy-2} that $1-\gamma y J_y(u^*|x,y) = 0$. 
This argument cannot be extended to the entire domain $\mc D_{\ov y}$ since one can prove that $J$ is not differentiable in $\Gamma$. Hence, the proof of optimality of $u^*$ requires a more detailed analysis and is left as future work. 

	\section{Conclusion}\label{sec:conclusion}
	This paper presents a behavioral-feedback SIR model in which the infection rate adapts dynamically based on the current state of the epidemic. 
	For this model, we derive an invariant of motion which allows a precise characterization of the maximum infection peak and of the dynamical behavior of the solution under a threshold constraint. In addition, we formulate an optimal control problem to balance intervention costs and public health goals and characterize the cost associated with a control strategy that enforces the minimum restrictions necessary to avoid exceeding the health capacity. 
	
	Future research could generalize the model with a broader dependence of the transmission rate on the fraction of infected individuals and further investigate the optimality of the filling-the-box strategy.
	
	\bibliographystyle{IEEEtran}
	\bibliography{bib}

\begin{thebibliography}{10}
\providecommand{\url}[1]{#1}
\csname url@samestyle\endcsname
\providecommand{\newblock}{\relax}
\providecommand{\bibinfo}[2]{#2}
\providecommand{\BIBentrySTDinterwordspacing}{\spaceskip=0pt\relax}
\providecommand{\BIBentryALTinterwordstretchfactor}{4}
\providecommand{\BIBentryALTinterwordspacing}{\spaceskip=\fontdimen2\font plus
\BIBentryALTinterwordstretchfactor\fontdimen3\font minus
  \fontdimen4\font\relax}
\providecommand{\BIBforeignlanguage}[2]{{%
\expandafter\ifx\csname l@#1\endcsname\relax
\typeout{** WARNING: IEEEtran.bst: No hyphenation pattern has been}%
\typeout{** loaded for the language `#1'. Using the pattern for}%
\typeout{** the default language instead.}%
\else
\language=\csname l@#1\endcsname
\fi
#2}}
\providecommand{\BIBdecl}{\relax}
\BIBdecl

\bibitem{Kermack.McKendrick:1927}
W.~O. Kermack and A.~G. McKendrick, ``A contribution to the mathematical theory
  of epidemics,'' \emph{Proceedings of the {R}oyal {S}ociety of {L}ondon.
  Series A, Containing papers of a mathematical and physical character}, vol.
  115, no. 772, pp. 700--721, 1927.

\bibitem{Hethcote2000TheMO}
H.~W. Hethcote, ``The mathematics of infectious diseases,'' \emph{SIAM review},
  vol.~42, no.~4, pp. 599--653, 2000.

\bibitem{CAPASSO197843}
V.~Capasso and G.~Serio, ``A generalization of the {K}ermack-{M}c{K}endrick
  deterministic epidemic model,'' \emph{Mathematical Biosciences}, vol.~42,
  no.~1, pp. 43--61, 1978.

\bibitem{korobeinikov2006lyapunov}
A.~Korobeinikov, ``Lyapunov functions and global stability for {SIR} and {SIRS}
  epidemiological models with non-linear transmission,'' \emph{Bulletin of
  Mathematical biology}, vol.~68, pp. 615--626, 2006.

\bibitem{franco2020feedback}
E.~Franco, ``A feedback {SIR} (f{SIR}) model highlights advantages and
  limitations of infection-dependent mitigation strategies,'' \emph{arXiv
  preprint arXiv:2004.13216}, 2020.

\bibitem{Baker2020}
R.~Baker, ``Reactive social distancing in a {SIR} model of epidemics such as
  {COVID}-19,'' \emph{arXiv preprint arXiv:2003.08285}, 2020.

\bibitem{Alutto2021OnSE}
M.~Alutto, G.~Como, and F.~Fagnani, ``On {SIR} epidemic models with
  feedback-controlled interactions and network effects,'' in \emph{60th IEEE
  Conference on Decision and Control (CDC)}, 2021, pp. 5562--5567.

\bibitem{zino2021analysis}
L.~Zino and M.~Cao, ``Analysis, prediction, and control of epidemics: A survey
  from scalar to dynamic network models,'' \emph{IEEE Circuits and Systems
  Magazine}, vol.~21, no.~4, pp. 4--23, 2021.

\bibitem{acemoglu2021optimal}
D.~Acemoglu, V.~Chernozhukov, I.~Werning, and M.~D. Whinston, ``Optimal
  targeted lockdowns in a multigroup {SIR} model,'' \emph{American Economic
  Review: Insights}, vol.~3, no.~4, pp. 487--502, 2021.

\bibitem{kruse2020optimal}
T.~Kruse and P.~Strack, ``Optimal control of an epidemic through social
  distancing,'' \emph{SSRN Electronic Journal}, 2020.

\bibitem{Cianfanelli.ea:2021}
L.~Cianfanelli, F.~Parise, D.~Acemoglu, G.~Como, and A.~Ozdaglar, ``Lockdown
  interventions in the {SIR} model: Is the reproduction number the right
  control variable?'' in \emph{60th IEEE Conference on Decision and Control
  (CDC)}, 2021, pp. 4254--4259.

\bibitem{Miclo.ea:2022}
L.~Miclo, D.~Spiro, and J.~Weibull, ``Optimal epidemic suppression under an
  {ICU} constraint: An analytical solution,'' \emph{Journal of Mathematical
  Economics}, vol. 101, p. 102669, 2022.

\bibitem{ACEMOGLU2024111391}
D.~Acemoglu, A.~Fallah, A.~Giometto, D.~Huttenlocher, A.~Ozdaglar, F.~Parise,
  and S.~Pattathil, ``Optimal adaptive testing for epidemic control: Combining
  molecular and serology tests,'' \emph{Automatica}, vol. 160, p. 111391, 2024.

\bibitem{thesisMartina}
M.~Alutto, ``Network effects and behavioral feedback in epidemic models,''
  \textit{{P}h{D} Thesis}, Politecnico di Torino - Università di Torino, 2025.

\bibitem{Harko2014ExactAS}
T.~Harko, F.~S.~N. Lobo, and M.~K. Mak, ``Exact analytical solutions of the
  {S}usceptible-{I}nfected-{R}ecovered ({SIR}) epidemic model and of the {SIR}
  model with equal death and birth rates,'' \emph{Applied Mathematics and
  Computation}, vol. 236, pp. 184--194, 2014.

\bibitem{sontag2023explicit}
E.~D. Sontag, ``An explicit formula for minimizing the infected peak in an
  {SIR} epidemic model when using a fixed number of complete lockdowns,''
  \emph{International Journal of Robust and Nonlinear Control}, vol.~33, no.~9,
  pp. 4708--4731, 2023.

\bibitem{feng2007final}
Z.~Feng, ``Final and peak epidemic sizes for {SEIR} models with quarantine and
  isolation,'' \emph{Mathematical Biosciences \& Engineering}, vol.~4, no.~4,
  pp. 675--686, 2007.

\bibitem{CanutoTabacco2008}
C.~Canuto and A.~Tabacco, \emph{Mathematical Analysis II}.\hskip 1em plus 0.5em
  minus 0.4em\relax Springer, 2015, vol.~85.

\bibitem{liberzon}
D.~Liberzon, \emph{Calculus of Variations and Optimal Control Theory: A Concise
  Introduction}.\hskip 1em plus 0.5em minus 0.4em\relax USA: Princeton
  University Press, 2011.

\bibitem{hale}
J.~K. Hale, \emph{Ordinary differential equations}.\hskip 1em plus 0.5em minus
  0.4em\relax Courier Corporation, 2009.

\end{thebibliography}
	
	\appendix

	\section{Proof of Lemma \ref{lemma:well-posedness-control}}\label{sec:proof-well-posedness-control}
	The right-hand side of \eqref{control-system} is locally Lipschitz with respect to $(x, y, z)$. On any time interval on which $u$ is continuous, local existence and uniqueness of a solution for the corresponding Cauchy problem is a consequence of the Picard-Lindelöf theorem \cite[I.3]{hale}. 
	If $u$ admits jump discontinuities at times $0<t_1<t_2<\cdots$, we define $u^k:[t_k, t_{k+1}]\to\R$ for $k\geq 0$ (with $t_0=0$) as the unique continuous function that coincides with $u$ on $[t_k, t_{k+1}]$. Given an initial condition $(x_0,y_0)$, we iteratively solve the Cauchy problem relative to \eqref{control-system} with such initial condition as follows. On every interval $[t_k, t_{k+1}]$ we use the control $u^k$ and solve for initial conditions $(x_k,y_k)$. We call $(x^k(t), y^k(t))$ the corresponding solution. We then recursively impose that $(x_{k+1},y_{k+1})=(x^k(t_{k+1}), y^k(t_{k+1}))$. This yields, by construction, a piecewise $\mathcal{C}^1$ maximal solution $(x(t), y(t))$ that coincides with $(x^k(t), y^k(t))$ on $[t_k, t_{k+1}]$. 
	
	Integrating \eqref{control-system}, assuming an initial condition $(x_0, y_0) \in \mc S$, yields
	\begin{align}
		x(t) &= x_0 \exp\left( \int_{0}^{t} \left( -u(\tau)\frac{b(x(\tau))}{1+ a y(\tau)}   y(\tau) \right)d\tau \right), \\
		y(t) &= y_0 \exp\left( \int_{0}^{t} \left( u(\tau)\frac{b(x(\tau))}{1+ a y(\tau)}   x(\tau)- \gamma \right)d\tau \right).
	\end{align}
	Moreover, summing the equations in \eqref{control-system} gives $\dot{x}(t) + \dot{y}(t) = - \gamma y$.
	Overall, as long as the solution $(x(t), y(t))$ exists, we must have that 
	$$x(t) \geq  0\,, \quad y(t) \geq 0\,,  \quad x(t) + y(t)\leq 1 \,.$$
	In other terms, as long as the solution exists, it lives in $\mc S$.  In particular, this implies that the solution is globally defined. 
	
\end{document}